\documentclass[a4paper,USenglish,cleveref, autoref, thm-restate]{lipics-v2021}



\usepackage{hyperref}

\usepackage[linesnumbered,ruled,vlined]{algorithm2e}
\SetCommentSty{mycommfont}
\usepackage{nicefrac}   
\usepackage{amsmath,cleveref}
\usepackage{autonum}
\usepackage{amsthm,amssymb,mathtools}


\usepackage{bigints}
\usepackage{comment}
\usepackage{colortbl}
\newcolumntype{d}{>{\columncolor{gray!30}}c}
\usepackage{booktabs}

\usepackage[numbers,sort]{natbib} 
\usepackage{enumitem}
\newcommand{\true}{\texttt{true}}
\newcommand{\false}{\texttt{false}}

\newcommand{\bA}{\mathbf{A}}
\newcommand{\DP}{\mathrm{DP}}
\newcommand{\vmax}{v_{\max}}
\renewcommand{\mid}{:}
\DeclareMathOperator*{\argmax}{arg\,max}
\newcommand{\ot}{\leftarrow}

\usepackage{tikz}
\usetikzlibrary{positioning,shapes.misc,calc}

\usepackage{todonotes}
\definecolor{myyellow}{cmyk}{0,0.02,0.23,0.01}

\SetKwInput{KwInput}{Input}                
\SetKwInput{KwOutput}{Output}              

\bibliographystyle{plainurl}

\title{Simultaneously Fair Allocation of Indivisible Items Across Multiple Dimensions}



\author{Yasushi Kawase}{The University of Tokyo, Japan}{kawase@mist.i.u-tokyo.ac.jp}{https://orcid.org/0000-0001-5626-779X}{Supported by JSPS KAKENHI Grant Number JP25K00137, by JST PRESTO Grant Number JPMJPR2122, by JST ERATO Grant Number JPMJER2301, and by Value Exchange Engineering, a joint research project between Mercari, Inc.\ and the RIISE.}
\author{Bodhayan Roy}{Indian Institute of Technology Kharagpur, India}{bodhayan.roy@gmail.com}{https://orcid.org/0009-0005-6476-3060}{}
\author{Mohammad Azharuddin Sanpui}{Indian Institute of Technology Kharagpur, India}{
azharuddinsanpui123@gmail.com}{https://orcid.org/0000-0001-5030-9645}{}

\authorrunning{Y. Kawase, B. Roy, and M.A. Sanpui} 

\Copyright{Jane Open Access and Joan R. Public} 

\ccsdesc[300]{Theory of computation~Algorithmic game theory}
\ccsdesc[300]{Theory of computation~Solution concepts in game theory}

\keywords{Fair allocation, Envy-free up to one good, Multi-dimensional criteria, Linear programming, NP-hardness.} 

\category{} 

\relatedversion{} 



\acknowledgements{}

\nolinenumbers 

\EventEditors{John Q. Open and Joan R. Access}
\EventNoEds{2}
\EventLongTitle{42nd Conference on Very Important Topics (CVIT 2016)}
\EventShortTitle{CVIT 2016}
\EventAcronym{CVIT}
\EventYear{2016}
\EventDate{December 24--27, 2016}
\EventLocation{Little Whinging, United Kingdom}
\EventLogo{}
\SeriesVolume{42}
\ArticleNo{23}

\begin{document}
\maketitle

\begin{abstract}
This paper explores the fair allocation of indivisible items in a multidimensional setting, motivated by the need to address fairness in complex environments where agents assess bundles according to multiple criteria. Such multidimensional settings are not merely of theoretical interest but are central to many real-world applications. For example, cloud computing resources are evaluated based on multiple criteria such as CPU cores, memory, and network bandwidth. In such cases, traditional one-dimensional fairness notions fail to capture fairness across multiple attributes.
To address these challenges, we study two relaxed variants of envy-freeness: \emph{weak simultaneously envy-free up to $c$ goods (weak sEF$c$)} and \emph{strong simultaneously envy-free up to $c$ goods (strong sEF$c$)}, which accommodate the multidimensionality of agents’ preferences.
Under the weak notion, for every pair of agents and for each dimension, any perceived envy can be eliminated by removing, if necessary, a different set of goods from the envied agent’s allocation. In contrast, the strong version requires selecting a single set of goods whose removal from the envied bundle simultaneously eliminates envy in every dimension.
We provide upper and lower bounds on the relaxation parameter $c$ that guarantee the existence of weak or strong sEF$c$ allocations, where these bounds are independent of the total number of items. In addition, we present algorithms for checking whether a weak or strong sEF$c$ allocation exists. Moreover, we establish NP-hardness results for checking the existence of weak sEF1 and strong sEF1 allocations.
\end{abstract}


\section{Introduction}


Fair allocation of resources is a central topic in economics, operations research, computer science, and public policy~\cite{Amanatidis2022FairDO,reiter1962allocating,maskin1987fair,huesch2012one}. 
Among the various notions of fairness studied in the literature, one of the most prominent is \emph{envy-freeness}. An allocation is said to be envy-free if each agent weakly prefers her own bundle to that of any other agent, according to her preferences.

While envy-freeness can always be achieved when resources are divisible~\cite{brams1995envy,aziz2016computational}, it is often unattainable for indivisible resources. For example, when allocating a single valuable indivisible item to two agents, there is no envy-free complete allocation.
This motivates an increasing amount of research to explore the relaxations of envy-freeness, such as \emph{envy-freeness to envy-freeness up to $c$ items (EF$c$)}~\cite{lipton2004approximately,suksompong2020number}.
An allocation is EF$c$ if, for any pair of agents, any envy one agent feels towards another can be eliminated by removing at most $c$ items from the latter's bundle, where $c$ is a positive integer.

Traditionally, resource allocation models have focused on trying to be fair with respect to a single attribute, such as profit or cost~\cite{aziz2022algorithmic,lipton2004approximately}.
However, in practice, the value of an item is rarely determined by a single criterion~\cite{flammini2025fair,Bu2024fair,Barman2025FairDW}. 
For example, cloud computing resources are evaluated based on multiple criteria such as CPU cores, threads, GPU capabilities, memory, disk space, network bandwidth, energy efficiency.
In e-commerce, products are evaluated by price, quality, sustainability, and user ratings.
Even in everyday scenarios---such as dividing household chores or allocating office space---multiple factors must be considered.
This multidimensionality raises a natural question: 
\begin{quote}
\emph{To what extent can relaxed notions of fairness be achieved across all attributes simultaneously, and how can such a fair allocation be computed efficiently?}
\end{quote}
In this paper, we investigate the fair allocation of indivisible items in a multidimensional setting, where each item is characterized by multiple attributes and agents have attribute-dependent preferences.

Our problem is closely related to a model recently introduced by Bu et al.~\cite{Bu2024fair}, who consider a fair resource allocation framework that accounts for the preferences of both the allocator and the agents.
Their model can be viewed as a special case of our setting, where each item has two attributes---one according to the allocator and one according to the agents.
They introduced a fairness criterion called \emph{doubly EF$c$}: an allocation is called doubly EF$c$ if it is EF$c$ for the agents and also EF$c$ from the allocator's perspective.
They provided the existence of a doubly EF1 allocation when there are two agents.
However, they showed that a triply EF1 allocation (i.e., EF$c$ from three standpoints) may not exist.

In this work, we introduce and analyze the notion of \emph{simultaneous EF$c$}, which extends the concept of doubly EF$c$ to settings with an arbitrary number of attributes, rather than being limited to two or three. In the setting of Bu et al.~\cite{Bu2024fair}, the set of items removed to eliminate envy can be chosen separately for each attribute, reflecting the perspective of different stakeholders for each attribute. However, such flexibility may not always be appropriate in our multi-attribute context. In many real-world scenarios, the attributes represent different criteria or dimensions evaluated by the same agents, rather than by distinct individuals depending on the attribute. In such cases, it is more natural and meaningful to require that the set of items removed to mitigate envy be the same across all attributes. Motivated by this observation, we define two variants of simultaneous EF$c$: one that allows attribute-specific removals and another that requires a common removal set for all attributes.

\subsection{Our Results}
We study the fair allocation of indivisible items in a multidimensional setting, where each agent's valuation is represented by an $\ell$-dimensional vector. In this framework, we explore two notions of relaxed envy-freeness: (i) \emph{weak simultaneously envy-free up to $c$ goods (weak sEF$c$)}, and (ii) \emph{strong simultaneously envy-free up to $c$ goods (strong sEF$c$)}. Notably, weak sEF$c$ in the two-dimensional case corresponds to the doubly EF$c$ and was introduced by Bu et al.~\cite{Bu2024fair}. In contrast, strong sEF$c$ is a new fairness concept introduced in this paper.

Under weak sEF$c$, for any pair of distinct agents $i$ and $i'$, and for every evaluation dimension, any envy that agent $i$ may have toward agent $i'$ can be eliminated by removing at most $c$ items from agent $i'$'s allocation in that dimension. In contrast, strong sEF$c$ requires that, for every pair of agents $i$ and $i'$, there exists a set of at most $c$ items whose removal from agent $i'$'s bundle simultaneously eliminates the envy of agent $i$ in every dimension.

Bu et al.~\cite{Bu2024fair} provided an instance with two identical agents and three dimensions that admits no weak sEF1 allocation (\Cref{thm:weak-nonex-1}).
We observe that this result can be extended to the general $\ell$-dimensional case by showing that $c=\Omega(\sqrt{\ell})$ is a lower bound to guarantee the existence of a weak sEF$c$ allocation (\Cref{thm:weak-nonex-c}).
For the strong version, we establish a stronger lower bound: namely, that $c=\lceil \ell/2\rceil$ is necessary to guarantee the existence of a strong sEF$c$ allocation (\Cref{thm:strong-nonex-c}).

Next, we show that when there are two agents, a strong sEF$(2\ell-1)$ allocation always exists and can be computed in polynomial time (\Cref{thm:nonid-ex-c}).
Furthermore, for a general number of agents, we prove the existence of a strong sEF$(n^2\ell^2)$ allocation (\Cref{thm:n-EFc}). We remark that since every strong sEF$c$ allocation is also weak sEF$c$ by definition, these existence results immediately apply to the weak case as well.
In addition, we develop algorithms to decide whether a weak or strong sEF$c$ allocation exists in time $(m\vmax)^{O(cn^2\ell)}$, where $\vmax$ denotes the maximum value assigned by any agent to any item in any dimension (\Cref{thm:dp-weak} and \Cref{thm:dp-strong}).

Finally, we complement our existence results with computational intractability findings. Specifically, we prove that even checking for the existence of a weak sEF1 allocation is NP-complete, even when there are only two identical agents with binary valuations (\Cref{thm:hard-weak}). Moreover, we demonstrate that the corresponding decision problem for strong sEF1 allocations is NP-complete even in the case of two identical agents and two dimensions (\Cref{thm:hard-strong}).
Further, we demonstrate that determining whether a given allocation is strong sEF$c$ is NP-complete even for the identical binary case, when $c$ is a parameter.

\subsection{Related Work}

Barman et al.~\cite{Barman2025FairDW} studied fair division with market values, a special case of the fair allocation with the allocator's preference.
Specifically, their model introduces a market valuation that serves as a distinct valuation profiles shared identically by all agents. This results in a two-dimensional valuation framework that comprises agents' personalized valuations alongside a universal market valuation.

A number of studies have explored fair division among groups of agents~\cite{foley1966resource,manurangsi,SegalHalevi,Suksompong2017,Conitzer2019GroupFF,berliant1992fair,Aziz2019AlmostGE,segal2019}. 
In these models, $\nu = \nu_1 + \cdots+ \nu_n$ agents will be partitioned into $n~(\geq 2)$ groups, with group $i$ comprising $\nu_i~(\geq 1)$ agents. It is crucial to point out that when each group consists an identical number of agents (i.e., $\nu_1 =\cdots = \nu_n=\nu/n$), the fair division problem under this group framework aligns with our model, in which each group can be regarded as a single agent with $\nu/n$ perspectives.
Moreover, even when the number of agents in each group are heterogeneous, we can reduce an instance of group framework to our setting with $\max_{i'\in[n]}$ dimensions by adding $\max_{i'\in[n]}\nu_{i'}-\nu_i$ dummy agents to each group $i\in[n]$ such that any item evaluates to $0$.

Kyropoulou et al.~\cite{kyroalmost} extended the classical EF$c$ concept to a group setting by requiring that an agent's envy toward another group can be eliminated by removing at most $c$ items from that group's allocation. 
When the number of agents in each group is the same, the definitions of group EF$c$ coincides with our notion of weak EF$c$.
For binary valuations, Kyropoulou et al.~\cite{kyroalmost} determined the cardinalities of groups for which a group EF1 allocation always exists. 
Moreover, they demonstrated that determining the existence of a group EF1 allocation for two groups of agents with binary valuations is NP-complete. 
Note that their reduction does not extend to the setting with \emph{identical} valuations in our setting.
In contrast, we show via a different reduction that checking the existence of a weak sEF1 allocation is NP-complete even for identical binary valuations with two agents. 

When there are two groups of agents, Segal-Halevi and Suksompong~\cite{segal2019} proved that a group EF$(\nu-1)$ allocation always exists, where $\nu=\nu_1+\nu_2$. This result implies existence of a weak EF$(2\ell-1)$ allocation in our setting with two agents and $\ell$ dimensions, although they left efficient computability. In contrast, we provide existence and efficient computability for strong EF$(2\ell-1)$.

Using the discrepancy theory, Manurangsi and Suksompong~\cite{Manurangsi2021AlmostEF} proved that a group EF$c$ allocation with $c=O(\sqrt{\nu})$ always exists. 
This implies the existence of a weak EF$c$ allocation with $c=O(\sqrt{n\ell})$ in our setting.
Moreover, they showed that $c=\Omega(\sqrt{\nu})$ is necessary to guarantee the existence of group EF$c$ allocations when there are constant number of groups. 
This implies that $c=\Omega(\sqrt{\ell})$ is necessary to guarantee the existence of weak EF$c$ allocations.
Furthermore, there are some recent works~\cite{caragiannis2025new,manurangsi2025tight} that asymptotically improve the lower bound for group fairness for group fair division, when the number of groups is super-constant.

Furthermore, our model involving multi-dimensional items share some similarities with the multi-layered cake cutting problem~\cite{Hosseini2020FairDO,igarashi2021envy,Cloutier2009TwoplayerEM,Nyman2017FairDW,Lebert2013EnvyfreeTM,kawase2025resource}.
However, unlike our setting, in the multi-layered cake cutting problem each layer of the cake at the same position must be allocated to different agents.


\section{Preliminaries}

In this paper, we address the problem of assigning $m$ items, each evaluated according to $\ell$ different criteria, to $n$ agents. We refer to this as the Multi-Dimensional Fair Allocation (MDFA) problem.

For a positive integer $n$, we denote the set $\{1,2,\dots,n\}$ by $[n]$. Let $N=[n]$ denote the set of agents and $M=\{g_1,g_2,\dots,g_m\}$ denote the set of indivisible items. 
In addition, let $L=[\ell]$ denote the set of dimensions.
Each agent $i\in N$ has a valuation function $v_i\colon M\to\mathbb{Z}^\ell_+$, which assigns to every item an $\ell$-dimensional vector of nonnegative integers.
We write $v_{ijk}=v_i(g_j)_k$ to indicate the value in the $k$\textit{th} dimension of item $g_j\in M$ for agent $i\in N$.
An instance of the MDFA problem is represented as $(N,M,L,(v_i)_{i\in N})$.

An allocation $\bA=(A_1,A_2,\dots,A_n)$ is a partition of the items $M$ (i.e., $\bigcup_{i\in N}A_i=M$ and $A_i\cap A_{i'}=\emptyset$ for any distinct agents $i,i'\in N$), where each subset $A_i\subseteq M$ is allocated to agent $i\in N$. The total valuation of agent $i$ for her allocated set $A_i$ is defined as 
$v_i(A_i)=\sum_{g_j\in A_i} v_i(g_j)$. 
Specifically, the $k$\textit{th} dimensional value of agent $i$ for the allocated set $A_i$ is defined as $v_i(A_i)_k=\sum_{g_j\in A_i} v_{ijk}$.

We call an instance of the MDFA problem \emph{binary} if $v_{ijk}\in\{0,1\}$ for every agent $i\in N$, every item $g_j\in M$, and every dimension $k\in L$. An instance is said to be \emph{identical} if every agent has the same valuation function, i.e, $v_1=v_2=\dots=v_n$.

We now introduce two fairness notions. Let $c$ be a nonnegative integer.
\begin{definition}[weak sEF$c$~\cite{Bu2024fair}] 
An allocation $\bA$ is said to be \emph{weak simultaneously envy-free up to $c$ goods (weak sEF$c$)} if for any distinct agents $i,i'\in N$ and any dimension $k\in L$, there exists a set $X\subseteq A_{i'}$ such that $|X|\le c$ and $v_i(A_i)_k\ge v_i(A_{i'}\setminus X)_k$.
\end{definition}

\begin{definition}[strong sEF$c$] 
An allocation $\bA$ is said to be \emph{strong simultaneously envy-free up to $c$ goods (strong sEF$c$)} if for any distinct agents $i,i'\in N$, there exists a set $X\subseteq A_{i'}$ such that $|X|\le c$ and $v_i(A_i)_k\ge v_i(A_{i'}\setminus X)_k$ for any dimension $k\in L$.
\end{definition}
Clearly, strong sEF$c$ implies weak sEF$c$, and weak sEF$c$ implies strong sEF$\ell c$.

Bu et al.~\cite{Bu2024fair} presented an instance in which no weak sEF1 allocation exists.
The instance consists of two agents and three items, with valuations as shown in \Cref{tbl:weak-nonex-1}.
In order to achieve EF$1$ for the first dimension, items $g_1$ and $g_2$ must be allocated to different agents.
Similarly, considering the second dimension, items $g_1$ and $g_3$ must be allocated to different agents, and for the third dimension, items $g_2$ and $g_3$ must be allocated to different agents.
However, no allocation can satisfies all these conditions simultaneously, implying that no weak sEF1 allocation exists.
\begin{theorem}[Bu et al.~\cite{Bu2024fair}]\label{thm:weak-nonex-1}
Even when there are two identical agents and the number of dimensions is three, a weak sEF1 allocation may not exist.
\end{theorem}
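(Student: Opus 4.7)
The plan is to realise the author's outline as an explicit three-item instance and verify that the three per-dimension separation constraints collectively rule out every possible allocation. Concretely, I would fix two identical agents and three items $g_1,g_2,g_3$ with the valuation vectors $v(g_1)=(1,1,0)$, $v(g_2)=(1,0,1)$, $v(g_3)=(0,1,1)$, so that in each dimension $k\in\{1,2,3\}$ exactly two items carry value~$1$ and the third carries value~$0$. This matches the table referenced as \Cref{tbl:weak-nonex-1} and, since both agents have the same valuations, the weak sEF$1$ condition reduces to a single envy inequality per unordered pair of agents and per dimension.

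The first step is a per-dimension lemma: with two agents and identical valuations, if in some dimension $k$ the only two positively valued items both lie in the same bundle $A_{i'}$, then for the other agent $i$ we have $v_i(A_i)_k=0$ while $v_i(A_{i'}\setminus\{g\})_k\ge 1$ for every single item $g$ removed; hence no set $X$ of size at most $1$ can restore $v_i(A_i)_k\ge v_i(A_{i'}\setminus X)_k$. Therefore weak sEF$1$ forces the two value-$1$ items in that dimension to be split between the two agents. Applying this to the three dimensions in turn yields the three constraints that $\{g_1,g_2\}$, $\{g_1,g_3\}$, and $\{g_2,g_3\}$ must each be separated across the two bundles.

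The final step is a short combinatorial argument: view the allocation as a $2$-colouring of $\{g_1,g_2,g_3\}$ with colours indexing the agents, and view the three separation constraints as the edges of the complete graph $K_3$ on these items. Since $K_3$ contains an odd cycle, it is not $2$-colourable, so no allocation can simultaneously satisfy all three constraints, and the instance admits no weak sEF$1$ allocation.

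I do not anticipate a genuine obstacle here: the only modelling choice is picking valuations that make each dimension a ``hard pair''; once that is done, the odd-cycle observation closes the argument immediately. The mild subtlety is to state the per-dimension lemma in a form that explicitly uses the identical-valuation assumption, so that the envy check need only be performed once per pair of bundles rather than in both directions.
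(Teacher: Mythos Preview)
Your proposal is correct and follows essentially the same approach as the paper: the paper exhibits exactly this three-item instance (\Cref{tbl:weak-nonex-1}) and argues that weak sEF1 forces each of the pairs $\{g_1,g_2\}$, $\{g_1,g_3\}$, $\{g_2,g_3\}$ to be split, which is impossible. Your $K_3$ odd-cycle phrasing is a clean way to package the final contradiction, but it is the same argument.
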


It is known that a weak sEF1 allocation is guaranteed to exist when there are two agents and two dimensions~\cite{Bu2024fair}.
In contrast, we now demonstrate that a strong sEF1 allocation may fail to exist even when there are two identical agents and two dimensions.
To see this, consider an instance with three items, whose valuations are given in \Cref{tab:strong-nonex-1}.
Suppose, without loss of generality, that $g_1$ is allocated to agent $1$.
Then, in order to satisfy the EF1 condition for the first dimension, the item $g_2$ must be allocated to agent $2$.
Similarly, to achieve EF1 for the second dimension, the item $g_3$ must be also allocated to agent $2$.
Thus, the resulting allocation is $\bA=(\{g_1\},\, \{g_2,g_3\})$.
However, this fails the strong sEF1 condition: Agent 1 envies agent 2 unless both of the two items are removed from agent 2's bundle. Hence, in this case, no strong sEF1 allocation exists.\footnote{The allocation satisfies weak sEF1 because envy can be eliminated by removing different items in each dimension.}
Symmetric reasoning applies if $g_1$ is initially assigned to agent 2.
\begin{theorem}\label{thm:strong-nonex-1}
Even when there are two agents and the number of dimensions is two, a strong sEF1 allocation may not exist.
\end{theorem}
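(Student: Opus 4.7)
The plan is to construct an explicit 3-item instance with two identical agents and two dimensions, and then verify by a short case analysis that no allocation is strong sEF1. Following the informal sketch preceding the statement, I would use the valuation profile from \Cref{tab:strong-nonex-1}, namely $v(g_1)=(1,1)$, $v(g_2)=(2,0)$, and $v(g_3)=(0,2)$. The design principle is that the unique single-item witness for eliminating envy in dimension~$1$ is $g_2$, while the unique such witness in dimension~$2$ is $g_3$; hence any common removal set that kills envy in both dimensions must contain both items, and strong sEF1 must fail once the allocation is sufficiently forced.

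By symmetry of the two identical agents I would assume without loss of generality that $g_1$ is assigned to agent~$1$. I would then rule out the asymmetric placements of $\{g_2,g_3\}$ by invoking the EF1 condition in each dimension separately. If $g_2$ were placed with agent~$1$, then in dimension~$1$ agent~$2$ would envy agent~$1$ by a gap that cannot be closed by removing any single item from agent~$1$'s bundle (no single-item dim-$1$ reduction is large enough); hence even weak sEF1 would fail, and a fortiori strong sEF1. A symmetric argument in dimension~$2$ excludes $g_3$ from agent~$1$. The degenerate case where agent~$1$ receives all three items is ruled out similarly, since agent~$2$ is then left empty and no single-item removal from agent~$1$'s bundle erases her envy in either dimension. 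The only surviving allocation is $\bA=(\{g_1\},\{g_2,g_3\})$.

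For this allocation, agent~$1$'s bundle has value $(1,1)$ while agent~$2$'s bundle has value $(2,2)$, so agent~$1$ envies agent~$2$ in both dimensions. To verify the failure of strong sEF1, I would enumerate the candidate removal sets $X\subseteq A_2$ with $|X|\le 1$: $X=\emptyset$ has no effect, $X=\{g_2\}$ leaves agent~$2$ with bundle value $(0,2)$ and hence dim-$2$ envy, and $X=\{g_3\}$ leaves $(2,0)$ and hence dim-$1$ envy. Since none of these simultaneously eliminates envy in both dimensions, strong sEF1 fails on $\bA$, and thus on every allocation of this instance.

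The main obstacle I anticipate is making the case analysis airtight — in particular, rigorously checking that each asymmetric placement of $g_2$ or $g_3$ indeed violates EF1 in the appropriate dimension, so as to force the candidate allocation to be essentially unique. Once this is established, the failure of strong sEF1 is an immediate consequence of the fact that $g_2$ and $g_3$ are the unique single-item envy-eliminators in their respective dimensions and cannot be combined into a single removal of size~$1$.
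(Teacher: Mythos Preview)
Your proposal is correct and follows essentially the same approach as the paper: the same three-item instance from \Cref{tab:strong-nonex-1}, the same symmetry reduction to $g_1\in A_1$, the same forcing of $g_2,g_3$ into $A_2$ via dimension-wise EF1, and the same final check that no single removal from $\{g_2,g_3\}$ eliminates envy in both dimensions. Your write-up is slightly more explicit about the subcases (including the all-to-one allocation), but the argument is the same.
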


\begin{table}[t]
\begin{minipage}[b]{.45\textwidth}
\centering
\begin{tabular}{c|ccc}
\toprule
item $g_j$& $g_1$ & $g_2$ & $g_3$\\
\midrule
$v_{ij1}$ & $1$   & $1$   & $0$  \\
$v_{ij2}$ & $1$   & $0$   & $1$  \\
$v_{ij3}$ & $0$   & $1$   & $1$  \\
\bottomrule
\end{tabular}
\caption{Valuations for an instance that admits no weak sEF1 allocations.}\label{tbl:weak-nonex-1}
\end{minipage}\hfill
\begin{minipage}[b]{.45\textwidth}
\centering
\begin{tabular}{c|ccc}
\toprule
item $g_j$& $g_1$ & $g_2$ & $g_3$\\
\midrule
$v_{ij1}$ & $1$   & $2$   & $0$  \\
$v_{ij2}$ & $1$   & $0$   & $2$  \\
\bottomrule
\end{tabular}
\caption{Valuations for an instance that admits no strong sEF1 allocation}\label{tab:strong-nonex-1}
\end{minipage}
\end{table}

\section{Non-Existence}
In this section, we present two non-existence results that illustrate the inherent challenges in achieving fairness under the multi-dimensional framework. In particular, for every nonnegative integer $c$, we show that even when agents have binary and identical valuations, there exist instances with only two agents where no allocation satisfies either weak or strong sEF$c$. 

We first establish that weak sEF$c$ allocations may fail to exist even in binary identical instances with only two agents, when $c=\Omega(\sqrt{\ell})$.
The construction of this instance is essentially equivalent to the one given by Manurangsi and Suksompong~\cite{Manurangsi2021AlmostEF} for fair division among two groups of agents. The idea is based on discrepancy theory~\cite{alon2016probabilistic}.
\begin{theorem}\label{thm:weak-nonex-c}
For any nonnegative integer $c$, there exists a binary identical instance with two agents and $O(c^2)$ dimensions that does not admit a weak sEF$c$ allocation.
\end{theorem}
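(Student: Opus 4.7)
The plan is to reduce the existence of a weak sEF$c$ allocation for two identical binary agents to a discrepancy bound, and then rule out the latter via a Hadamard matrix, following Manurangsi and Suksompong~\cite{Manurangsi2021AlmostEF}. The first step is a simple reformulation: in the identical binary case with two agents, an allocation $(A_1,A_2)$ is weak sEF$c$ if and only if $|v(A_1)_k - v(A_2)_k| \le c$ for every dimension $k \in L$, since in each dimension it suffices to remove unit-value items from the envied bundle, and such items are always available whenever envy is positive.

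Next, I would construct the hard instance from a Hadamard matrix $H$ of order $n$, where $n$ is a power of $2$ with $n = \Theta(c^2)$, taken large enough that the bound below is strict (for concreteness, the smallest power of $2$ exceeding $5c^2$ suffices). Without loss of generality, assume the first row of $H$ is the all-ones vector. The items correspond to the $n$ columns of $H$. For each remaining row $k \in \{2,\dots,n\}$, introduce two dimensions $k^+$ and $k^-$ with $v_{j,k^+}=1$ if $H_{kj}=+1$ and $0$ otherwise, and symmetrically for $k^-$; also include one ``cardinality'' dimension $k^\star$ in which every item has value $1$, corresponding to the first row of $H$. The total number of dimensions is $2(n-1)+1 = O(c^2)$.

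To derive the contradiction, I would encode any allocation $(A_1,A_2)$ as a signing $\epsilon \in \{-1,+1\}^n$ via $\epsilon_j = +1$ if $g_j \in A_1$ and $\epsilon_j = -1$ otherwise. A direct calculation then gives $(H\epsilon)_1 = v(A_1)_{k^\star} - v(A_2)_{k^\star}$ and, for $k \ge 2$, $(H\epsilon)_k = (v(A_1)_{k^+} - v(A_2)_{k^+}) - (v(A_1)_{k^-} - v(A_2)_{k^-})$. Under weak sEF$c$, the reformulation yields $|(H\epsilon)_1| \le c$ and $|(H\epsilon)_k| \le 2c$ for $k \ge 2$, whence $\|H\epsilon\|_2^2 \le c^2 + 4c^2(n-1) = c^2(4n-3)$. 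On the other hand, the orthogonality $H^\top H = nI$ gives $\|H\epsilon\|_2^2 = n\|\epsilon\|_2^2 = n^2$; choosing $n \ge 5c^2$ makes $n^2 > c^2(4n-3)$, and the contradiction follows.

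The main obstacle is converting the $\pm 1$ Hadamard entries into $\{0,1\}$ valuations while preserving the benefit of $H^\top H = nI$. Splitting each row into its positive and negative parts (doubling the dimension count to $2(n-1)$) and appending a single all-ones dimension for the first row resolves this without inflating $\ell$ beyond $O(c^2)$. Small boundary values of $c$ (e.g., $c=0$ via a single-item, single-dimension instance, and $c=1$ via \Cref{thm:weak-nonex-1}) can be handled directly if the clean Hadamard bound requires $c$ above a small threshold.
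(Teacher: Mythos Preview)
Your proposal is correct and follows the same high-level Hadamard/discrepancy idea as the paper, but the implementation differs in two respects worth noting. The paper encodes the binary instance directly as $H^*=(H+J)/2$, so each row of the Hadamard matrix becomes a single dimension, and then invokes the known bound $\|H^*u\|_\infty\ge\sqrt{r}/2$ from discrepancy theory (Alon--Spencer) to conclude immediately that some dimension has imbalance exceeding $c$. You instead split each non-constant row into its $+1$ and $-1$ supports (roughly doubling the dimension count) and argue via the $\ell_2$ identity $\|H\epsilon\|_2^2=n^2$, which you derive from $H^\top H=nI$ without citing an external bound. Both routes yield $O(c^2)$ dimensions; the paper's encoding is tighter by a constant factor and shorter, while yours is more self-contained since it avoids appealing to the $\ell_\infty$ discrepancy fact. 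Your preliminary reformulation (weak sEF$c$ in the binary identical two-agent case is equivalent to $|v(A_1)_k-v(A_2)_k|\le c$ for all $k$) is correct and is used implicitly in the paper as well.
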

\begin{proof}
Let $r$ be the smallest power of two that is at least $4c^2+1$.
Let $H\in\{-1,+1\}^{r\times r}$ be an Hadamard matrix of order $r$, i.e., a $\{-1,+1\}$-matrix whose row vectors are mutually orthogonal.
Note that such a matrix can be constructed.
Let $J$ denote the all-$1$ matrix of order $r$ and define $H^*=(H+J)/2$.
It is known that for any $u\in\{-1,+1\}^r$, we have $\|H^*u\|_{\infty}\ge \sqrt{r}/2$~(see, e.g., \cite[Section 13.4]{alon2016probabilistic}).

Consider the instance defined by $N=\{1,2\}$, $M=\{g_1,g_2,\dots,g_{r}\}$, and $L=[r]$.
For every agent $i\in N$, item $g_j\in M$, and dimension $k\in L$, define $v_{ijk}=H^*_{jk}$.
This instance is both binary and identical.
Note that the number of dimensions $r$ is at most $2(4c^2+1)=O(c^2)$.

For an allocation $\bA=(A_1,A_2)$, consider the column vector $u\in\{-1,+1\}^r$ defined by $u_j=+1$ if $g_j\in A_1$ and $u_j=-1$ if $g_j\in A_2$.
Then, we have
$\max_{k\in L}|v_1(A_1)_k-v_1(A_2)_k| = \|H^* u\|_{\infty} \ge \sqrt{r}/2>c$.
This means that we need to remove at least $c+1$ items to eliminate envy. This completes the proof.
\end{proof}

This theorem has a direct consequence for the strong variant. Because any strong sEF$c$ allocation must also satisfy weak sEF$c$, the impossibility of achieving weak sEF$c$ under certain conditions immediately implies that strong sEF$c$ allocations cannot exist under the same conditions. 
We further strengthen this conclusion by demonstrating that even when the number of dimensions $\ell$ is exactly $2c+1$, a strong sEF$c$ allocation may fail to exist. Thus, requiring $\ell\le 2c$, or equivalently $c\ge \lceil \ell/2\rceil$, is necessary to guarantee the existence of a strong sEF$c$ allocation.
\begin{theorem}\label{thm:strong-nonex-c}
For any nonnegative integer $c$, there exists a binary instance with identical valuations for two agents and with $2c + 1$ dimensions that does not admit a strong sEF$c$ allocation.
\end{theorem}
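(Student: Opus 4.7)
The plan is to construct a very simple ``identity-like'' instance: set $m=\ell=2c+1$, and for both agents $i\in\{1,2\}$, every item $g_j\in M$, and every dimension $k\in L$, define $v_{ijk}=1$ if $j=k$ and $v_{ijk}=0$ otherwise. This is clearly a binary identical instance with exactly $2c+1$ dimensions, so the only task is to verify that no allocation is strong sEF$c$. The key structural feature is that dimension $k$ is entirely controlled by item $g_k$: the $k$\textit{th}-dimensional value of any bundle is $1$ if that bundle contains $g_k$ and $0$ otherwise.

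Next, I would fix an arbitrary allocation $\bA=(A_1,A_2)$ and, without loss of generality, assume $|A_2|\ge|A_1|$. Since $|A_1|+|A_2|=2c+1$, this forces $|A_2|\ge c+1$. I would then argue that for every item $g_k\in A_2$, agent $1$'s envy in dimension $k$ can only be eliminated by removing $g_k$ itself: indeed, $v_1(A_1)_k=0$ while $v_1(A_2)_k=1$, and for any $X\subseteq A_2$ one has $v_1((A_2\setminus X))_k=0$ precisely when $g_k\in X$. Consequently, a single set $X\subseteq A_2$ that simultaneously eliminates envy in every dimension must satisfy $X\supseteq\{g_k:g_k\in A_2\}=A_2$, giving $|X|\ge|A_2|\ge c+1>c$ and contradicting the definition of strong sEF$c$.

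I do not expect any real obstacle: the argument is just a pigeonhole on the size of the larger bundle combined with the observation that the diagonal valuation matrix forces a distinct witness of envy in every dimension corresponding to an item held by that bundle. The only point worth stating carefully is the boundary case $c=0$, where $m=\ell=1$ and the unique valuable item must be given to some agent, leaving the other envious with no items available to remove; this confirms that the construction also handles the degenerate case without change.
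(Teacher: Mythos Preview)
Your proposal is correct and essentially identical to the paper's own proof: the same diagonal identity-matrix instance with $m=\ell=2c+1$, the same WLOG pigeonhole giving $|A_2|\ge c+1$, and the same observation that eliminating envy in dimension $k$ forces $g_k\in X$ for every $g_k\in A_2$. There is no meaningful difference in approach or execution.
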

\begin{proof}
Consider the instance with $N=\{1,2\}$, $M=\{g_1,g_2,\dots,g_{2c+1}\}$, and $L=[2c+1]$.
For every agent $i\in N$, item $g_j\in M$, and dimension $k\in L$, define
$v_{ijk}$ to be $1$ if $j=k$ and $0$ otherwise.
This instance is both binary and identical.

Now, suppose toward a contradiction that there exists a strong sEF$c$ allocation $\bA=(A_1,A_2)$. Without loss of generality, assume that $|A_1|\le |A_2|$.
Since $|A_1|+|A_2|=|M|=2c+1$, it follows that $|A_1|\le c$ and $|A_2|\ge c+1$. 

Consider agent $1$'s evaluation of agent $2$'s bundle. 
By the definition of $v_{ijk}$, only item $g_{j^*}\in A_2$ contributes a value of 1 in dimension $j^*$. 
To eliminate envy in dimension $j^*$, the item $g_{j^*}$ must be removed from $A_2$. Therefore, in order to eliminate envy in all dimensions, every item in $A_2$ must be removed. However, since $|A_2|\ge c+1$, at least $c+1$ items must be removed, exceeding the allowable limit of $c$.

This contradiction shows that no strong sEF$c$ allocation exists for this instance.
\end{proof}

\section{Existence and Computability}
In this section, we present several positive results on the existence and efficient computability of weak and strong sEF$c$ allocations. In particular, we show existence and poynomial-time computability of a strong sEF$c$ allocation for a certain $c$ that is independent of the number of items. 
Moreover, we develop dynamic programming based algorithms that can determine the existence of a weak or strong sEF$c$ allocation in polynomial time when $c$, the number of agents $n$, and the number of dimensions $\ell$ are constants and the valuations are encoded in unary.

We first demonstrate that there exists a strong sEF$(2\ell-1)$ allocation and can be computed in polynomial time when there are two agents.
Note that, since every strong sEF allocation is also a weak sEF allocation, this result immediately implies the existence (and efficient computability) of weak sEF$(2\ell-1)$ allocations as well.
\begin{theorem}\label{thm:nonid-ex-c}
When there are two agents and $c$ is at least twice the number of dimensions minus one (i.e., $c\ge 2\ell-1$), a strong sEF$c$ allocation always exists.
Moreover, such an allocation can be computed in polynomial time.
\end{theorem}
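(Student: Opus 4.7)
The plan is to use a linear programming (LP) relaxation followed by threshold rounding. I would introduce a variable $x_j \in [0,1]$ for each item $g_j$, interpreted as the fraction of $g_j$ given to agent $1$, and impose the $2\ell$ fractional envy-freeness constraints
\[
\sum_{g_j\in M} x_j\, v_{1jk} \;\ge\; \sum_{g_j\in M}(1-x_j)\, v_{1jk}
\quad\text{and}\quad
\sum_{g_j\in M}(1-x_j)\, v_{2jk} \;\ge\; \sum_{g_j\in M} x_j\, v_{2jk},
\]
one per dimension $k\in L$ for each agent. The LP is feasible (take $x_j = 1/2$ for all $j$), so a basic feasible solution $x^*$ exists and is computable in polynomial time; since there are only $2\ell$ non-box inequalities, a standard tight-constraints count gives $|F|\le 2\ell$ for the set of fractional coordinates $F = \{g_j : x_j^* \in (0,1)\}$.

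I would then round by the threshold rule: put $g_j$ into $A_1$ if $x_j^* \ge 1/2$ and into $A_2$ otherwise, and write $F_s = F \cap A_s$. A direct substitution into the fractional envy-freeness inequality yields
\[
v_1(A_1)_k - v_1(A_2\setminus F_2)_k \;\ge\; 2\sum_{g_j\in F_1}(1-x_j^*)\, v_{1jk} + \sum_{g_j\in F_2}(1-2x_j^*)\, v_{1jk} \;\ge\; 0
\]
for every $k\in L$, since $x_j^* < 1/2$ on $F_2$ and $x_j^* \le 1$ on $F_1$ and valuations are nonnegative. Symmetrically, removing $F_1$ from $A_1$ eliminates agent $2$'s envy across every dimension. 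Hence the allocation is strong sEF$\max(|F_1|,|F_2|)$, and whenever $F$ has at least one item on each side of the threshold, this quantity is at most $2\ell-1$.

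The main obstacle is the extremal case $|F| = 2\ell$ in which every $x_j^*$ lies strictly on the same side of $1/2$, forcing $\max(|F_1|,|F_2|) = 2\ell$. At such a BFS every one of the $2\ell$ envy-freeness inequalities must be tight, so the BFS is pinned down by an equality system together with the tight box constraints. I plan to resolve this by pivoting to an adjacent BFS: relax one tight envy-freeness inequality, obtaining a one-dimensional edge of the polytope along which the remaining $2\ell-1$ envy equalities and all currently tight box constraints are preserved. Moving along the feasible orientation of this edge, the walk terminates when some $x_j$ reaches $0$ or $1$, converting a fractional coordinate to an integer and producing a new BFS with $|F|\le 2\ell-1$. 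Applying the threshold rounding above to this new BFS then yields a strong sEF$(2\ell-1)$ allocation.

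All steps run in polynomial time: the LP has $O(m)$ variables and $O(m+\ell)$ constraints, a BFS is polynomially computable via any standard LP algorithm, the threshold rounding is linear in $m$, and the pivot is a single simplex step. The trickiest step I expect is verifying that the pivot always strictly decreases $|F|$ rather than stalling---specifically, arguing that the relaxed envy-freeness inequality is pushed strictly slack in at least one orientation of the edge and that the walk meets a box constraint (reducing $|F|$) rather than instantaneously re-tightening another envy constraint; handling this cleanly may require choosing the orientation of the edge by the sign of the directional derivative of the relaxed inequality, and breaking any residual degeneracy with a lexicographic rule.
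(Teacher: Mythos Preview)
Your approach is essentially the same LP-plus-threshold-rounding argument as the paper's, and your computation showing that removing $F_2$ (resp.\ $F_1$) eliminates agent~1's (resp.\ agent~2's) envy in every dimension is exactly what the paper does. The difference lies only in how the bound $|F|\le 2\ell-1$ is secured.

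The paper avoids your pivoting step entirely with a one-line trick: instead of imposing all $2\ell$ fractional envy-freeness inequalities as constraints, it moves one of them (say agent~1's envy in dimension~1) into the \emph{objective}, maximizing $\sum_j v_{1j1}(2x_j-1)$ subject to the remaining $2\ell-1$ envy inequalities and the box constraints. Since $x=(1/2,\dots,1/2)$ is feasible with objective value~$0$, the optimum is nonnegative, so the omitted inequality is automatically satisfied at the optimal BFS. Now there are only $2\ell-1$ non-box constraints, so a basic optimal solution has at most $2\ell-1$ fractional coordinates directly---no pivot, no degeneracy analysis.

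Your pivot does in fact work, and your degeneracy worry is unfounded: when $|F|=2\ell$, exactly $m-2\ell$ box constraints and all $2\ell$ envy constraints are tight, and these $m$ constraints must be linearly independent for the point to be a vertex, so the BFS is automatically non-degenerate. Relaxing one envy constraint then yields a genuine one-dimensional edge; the remaining $2\ell-1$ envy equalities are preserved along it, the relaxed one becomes strictly slack in one orientation, and the integral coordinates are frozen (their box constraints are still in the defining system). Hence the only way the walk terminates is by a fractional coordinate hitting $\{0,1\}$, which drops $|F|$ to at most $2\ell-1$. So your argument closes, but the paper's objective-function device is the cleaner route and removes the need for this case analysis altogether.
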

\begin{proof}
Let $(N,M,L,(v_i)_{i\in N})$ be a given instance of the MDFA problem. 
We assume that $N=\{1,2\}$, $M=\{g_1,\dots,g_m\}$, and $L=[\ell]$.
Consider the following linear program:
\begin{align}
\max_{x\in\mathbb{R}^m}\quad & \textstyle\sum_{j=1}^m v_{1j1}(2x_j-1) \\
\text{s.t.}\quad& \textstyle\sum_{j=1}^m v_{1jk}(2x_j-1)\ge 0 &&(\forall k\in [\ell]\setminus\{1\}),\label{eq:ef1}\\
                & \textstyle\sum_{j=1}^m v_{2jk}(1-2x_j)\ge 0 &&(\forall k\in [\ell]),              \label{eq:ef2}\\
                & 0\le x_j\le 1                     &&(\forall j\in [m]).                 \label{eq:feasible}
\end{align}
A solution $x$ with a nonnegative objective yields a simultaneously envy-free fractional allocation, where $x_j$ is interpreted as the fraction of item $g_j$ allocated to agent 1.
Note that, since $x=(1/2,1/2,\dots,1/2)$ is a feasible solution, the optimal objective value is nonnegative.

Since the LP has $m$ variables, any basic solution is determined by $m$ linearly independent tight constraints.
In particular, at least $m-(2\ell-1)$ of these tight constraints must be of the form $x_j=0$ or $x_j=1$ (i.e., constraints in \eqref{eq:feasible}).
This implies that every basic solution has at least $m-2\ell+1$ coordinates that are integral (i.e., equal to $0$ or $1$). Equivalently, there can be at most $2\ell-1$ fractional coordinates.

We now construct an allocation from a basic optimal solution $x^*$.
Note that a basic optimal solution must exist since the feasible region is nonempty and bounded.
Define
\begin{align}
&I_1\coloneqq \{g_j\in M\mid x^*_j=1\},
&&I_2\coloneqq \{g_j\in M\mid x^*_j=0\},\\
&F_1\coloneqq \{g_j\in M\mid 1/2\le x^*_j<1\},
&&F_2\coloneqq \{g_j\in M\mid 0< x^*_j<1/2\}.
\end{align}
Then, since all fractional coordinates of $x^*$ lie in $F_1\cup F_2$, we have $|F_1|+|F_2|\le 2\ell-1$.

Define the allocation $\bA=(A_1,A_2)$ by $A_1=I_1\cup F_1$ and $A_2=I_2\cup F_2$.
Since $x^*$ satisfies \eqref{eq:ef1} and the objective value is nonnegative, for every dimension $k\in[\ell]$ we have
\begin{align}
\textstyle
v_1(A_1)_k
= \sum_{g_j\in A_1}v_{1jk}
&\textstyle\ge \sum_{g_j\in A_1}v_{1jk}(2x^*_j-1)\\
&\textstyle\ge \sum_{g_j\in A_2}v_{1jk}(1-2x^*_j)
\ge \sum_{g_j\in I_2} v_{1jk}
= v_1(A_2\setminus F_2)_k,
\end{align}
where the second inequality holds since, for the optimal solution $x^*$, the objective value of LP is nonnegative and inequalities in \eqref{eq:ef1} are satisfied. 
Similarly, by \eqref{eq:ef2},
for every dimension $k\in[\ell]$ we have
\begin{align}
\textstyle
v_2(A_2)_k
= \sum_{g_j\in A_2}v_{2jk}
&\textstyle\ge \sum_{g_j\in A_2}v_{2jk}(1-2x^*_j)\\
&\textstyle\ge \sum_{g_j\in A_1}v_{2jk}(2x^*_j-1)
\ge \sum_{g_j\in I_1} v_{2jk}
= v_2(A_1\setminus F_1)_k.
\end{align}
Thus, any envy can be eliminated by removing all the fractional items from the bundle of the other agent. Hence, the allocation is strong sEF$(2\ell-1)$.
Since $c\ge 2\ell-1$, the allocation is strong sEF$c$ as well.

Finally, since the LP has polynomial number of constraints and variables, a basic optimal solution can be computed in polynomial time via standard linear programming techniques (see, e.g., a textbook \cite{bertsimas1997introduction}). 
Hence, the resulting allocation $\bA$ can be found in polynomial time.
\end{proof}

If the valuations of the two agents are identical, we can reduce the requirement of $2\ell-1$ to $\ell$.
Due to space constraints, the proof is deferred to the Appendix.
\begin{restatable}{theorem}{THMidExC}\label{thm:id-ex-c}
When there are two identical agents and $c$ is at least the number of dimensions (i.e., $c\ge \ell$), a strong sEF$c$ allocation always exists.
Moreover, such an allocation can be computed in polynomial time.
\end{restatable}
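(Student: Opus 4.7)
The plan is to follow the linear-programming approach of \Cref{thm:nonid-ex-c}, but to exploit the identical valuations to merge the two families of envy constraints into a single family of equalities, one per dimension. This collapses the count of non-bound constraints from $2\ell - 1$ down to $\ell$, and thereby tightens the bound on the number of fractional coordinates in a basic feasible solution from $2\ell-1$ to $\ell$. Concretely, writing $v_{jk}$ for the common value of item $g_j$ in dimension $k$, I would consider the LP
\begin{align*}
\text{find}\quad & x \in \mathbb{R}^m\\
\text{s.t.}\quad & \textstyle\sum_{j=1}^m v_{jk}(2x_j - 1) = 0 && (\forall k \in [\ell]),\\
& 0 \le x_j \le 1 && (\forall j \in [m]).
\end{align*}
The program is feasible (take $x = (1/2,\dots,1/2)$), and a standard basic-solution argument shows that any vertex $x^*$ has at most $\ell$ fractional coordinates, since the $m$ linearly independent tight constraints defining a vertex can include at most $\ell$ equality constraints, forcing the remaining $m-\ell$ tight constraints to be bound constraints.

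Next, mirroring the proof of \Cref{thm:nonid-ex-c}, I partition the items into $I_1 = \{g_j : x^*_j = 1\}$, $I_2 = \{g_j : x^*_j = 0\}$, $F_1 = \{g_j : 1/2 \le x^*_j < 1\}$, and $F_2 = \{g_j : 0 < x^*_j < 1/2\}$, and define the allocation $\bA = (A_1, A_2)$ by $A_1 = I_1 \cup F_1$ and $A_2 = I_2 \cup F_2$. The fractional-coordinate bound yields $|F_1| + |F_2| \le \ell$, so in particular $|F_1|, |F_2| \le \ell$.

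The crux is then to verify that removing the opposing fractional set from each bundle eliminates envy in every dimension simultaneously. Using the equality constraint to substitute for $\sum_{j \in I_1} v_{jk} - \sum_{j \in I_2} v_{jk}$, a direct rearrangement gives
\begin{align*}
v(A_1)_k - v(A_2 \setminus F_2)_k
= \sum_{g_j \in F_1} v_{jk}(2 - 2x^*_j) + \sum_{g_j \in F_2} v_{jk}(1 - 2x^*_j) \ge 0
\end{align*}
for every $k \in [\ell]$, since $x^*_j \in [1/2, 1)$ on $F_1$ and $x^*_j \in (0, 1/2)$ on $F_2$. By symmetry, $v(A_2)_k \ge v(A_1 \setminus F_1)_k$ for every $k$. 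Hence $\bA$ is strong sEF$\ell$, and therefore strong sEF$c$ for any $c \ge \ell$. A basic feasible solution can be computed in polynomial time via standard LP techniques, so $\bA$ is produced in polynomial time.

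The main conceptual obstacle is spotting that identical valuations turn the two families of one-sided envy inequalities into a single family of equalities, which is what yields the sharper $\ell$ bound; once this reformulation is in hand, the rest of the argument is a symmetric specialization of the previous proof, and the only minor care-point is the algebraic rearrangement certifying nonnegativity of the residual, which is automatic from the sign of $2 - 2x^*_j$ on $F_1$ and $1 - 2x^*_j$ on $F_2$.
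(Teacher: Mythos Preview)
Your proposal is correct and matches the paper's own proof essentially line for line: the same equality-constrained polytope, the same basic-solution counting argument giving at most $\ell$ fractional coordinates, the same partition into $I_1,I_2,F_1,F_2$, and the same rounding to $\bA=(I_1\cup F_1,\,I_2\cup F_2)$. The only cosmetic difference is that you certify $v(A_1)_k\ge v(A_2\setminus F_2)_k$ via an explicit algebraic rearrangement, whereas the paper writes the same fact as a short chain of inequalities; both are immediate from the equality constraint and the sign conditions on $F_1,F_2$.
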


When the number of agents is three or more, a direct application of the polytope-based approach presented in \Cref{thm:nonid-ex-c} and \Cref{thm:id-ex-c} fails. Indeed, consider the case with three agents in one dimension, where the valuations are given by
\begin{align}
(v_{i11},v_{i21},v_{i31},\dots,v_{im1})=(2(m-1),1,1,\dots,1)
\qquad(i=1,2,3).
\end{align}
In this case, allocating half of item $g_1$ to agents $1$ and $2$, while assigning all the remaining items to agent $3$ yields an envy-free fractional allocation.
A reasonable rounding method in this situation is to assign item $g_1$ to either agent $1$ or $2$.
However, the agent who does not receive $g_1$ will envy agent $3$, and this envy cannot be eliminated by removing $m-2$ items.

To overcome this difficulty, we pre-assign certain items before employing the polytope-based approach. This strategy allows us to derive an upper bound on $c$ that guarantees the existence of a strong sEF$c$ allocation independent of the total number of items $m$.
\begin{theorem}\label{thm:n-EFc}
When $c$ is at least the square of the product of the numbers of agents and the number of dimensions (i.e., $c\ge n^2\ell^2$), a strong sEF$c$ allocation always exists.
Moreover, such an allocation can be computed in polynomial time.
\end{theorem}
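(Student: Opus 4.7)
The plan is to generalize the linear-programming rounding argument of \Cref{thm:nonid-ex-c} to $n \geq 3$ agents, combining it with a pre-assignment step to handle items that carry disproportionate value---exactly the obstruction highlighted by the three-agent example preceding the theorem.

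I would start with the LP over fractional assignments $x_{ij}\in[0,1]$ encoding the simultaneous envy-free constraints $\sum_{j} v_{ijk}(x_{ij}-x_{i'j})\ge 0$ for every ordered pair $i\ne i'$ and every $k\in L$, together with $\sum_i x_{ij}=1$ and $x_{ij}\ge 0$. The uniform allocation $x_{ij}=1/n$ is feasible, so a basic feasible solution $x^*$ exists. A basic solution of an LP with $nm$ variables, $m$ equality constraints, and $n(n-1)\ell$ inequality constraints has at most $n(n-1)\ell+m$ positive entries; since every item is fully assigned, the number of \emph{split} items (those with $x^*_{ij}>0$ for at least two agents) is at most $n(n-1)\ell=O(n^2\ell)$.

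Before invoking the LP I would pre-assign the items that are ``heavy'' in some dimension for some agent: for each pair $(i,k)$, pick the top $O(n\ell)$ items by $v_{ijk}$ and distribute them using a balanced rule such as round-robin, or repeatedly give the current largest heavy item to the agent whose $k$th-dimensional tally is smallest. Performed over every $(i,k)$, this step touches at most $O(n^2\ell^2)$ items and leaves a residual sub-instance whose individual values are small relative to the residual total in every dimension. Next, solve the LP on the residual sub-instance and round each split item arbitrarily to some agent with $x^*_{ij}>0$. For any pair $(i,i')$ the removal set should be taken as
\[
X_{ii'} \;=\; (A_{i'}\cap S)\;\cup\;\bigcup_{k\in L} H_{ii'k},
\]
where $S$ denotes the split items from the LP step and each $H_{ii'k}\subseteq A_{i'}$ consists of the pre-assigned heavy items whose removal cancels the remaining envy of $i$ towards $i'$ in dimension $k$. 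By the bounds above, $|A_{i'}\cap S|=O(n^2\ell)$ and $|H_{ii'k}|=O(n\ell)$; the union over the $\ell$ dimensions yields $|X_{ii'}|=O(n^2\ell^2)$, and a careful constant tracking should give $|X_{ii'}|\le n^2\ell^2$. Polynomial-time computability is then immediate: the pre-assignment is combinatorial, and the LP has polynomially many variables and constraints, so a basic optimum is computable in polynomial time by standard methods.

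The hard part is constructing $X_{ii'}$: strong sEF$c$ requires a \emph{single} removal set that cancels envy in every dimension simultaneously (unlike weak sEF, where different sets are allowed per dimension), and this is precisely where the extra factor of $\ell$ arises. The pre-assignment thresholds must be tuned so that, in every dimension, envy can be erased by removing only $O(n\ell)$ heavy items---small enough that taking the $\ell$-fold union still fits within the target $n^2\ell^2$. Verifying this simultaneous bound, rather than the easier per-dimension one, is the main technical hurdle.
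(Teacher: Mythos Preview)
Your high-level architecture---pre-assign a bounded number of items, then LP-round the residue and count split items via the basic-feasible-solution argument---matches the paper's proof, and your split-item count $n(n-1)\ell$ is exactly right. The gap is in the pre-assignment step. You propose to take, for each pair $(i,k)$, the top $O(n\ell)$ items by $v_{ijk}$ and then \emph{distribute them among all agents} via round-robin or a greedy balancing rule. But balancing heavy items does not give agent $i$ any guaranteed compensation for the value she loses when the LP's split items are rounded away from her; your claim that removing $|H_{ii'k}|=O(n\ell)$ heavy items from $A_{i'}$ cancels the residual envy in dimension $k$ has no supporting inequality, and balancing in one dimension says nothing about other dimensions.

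The paper's key idea is the opposite of balancing: for each $(i,k)$ it gives the $(n-1)^2\ell$ currently highest-$v_{ijk}$ items \emph{to agent $i$ herself}, forming a buffer $Z_{i,k}\subseteq A_i$. Since every item in $Z_{i,k}$ dominates (in dimension $k$, for agent $i$) every residual item, and $|Z_{i,k}|=(n-1)^2\ell\ge |F\setminus F_i|$, one gets $v_i(Z_{i,k})_k\ge v_i(F\setminus F_i)_k$, which exactly offsets the rounding loss. The removal set is then simply $X_{ii'}=F_{i'}\cup\bigcup_{k}Z_{i',k}$ (note it is independent of $i$), with $|X_{ii'}|\le (n-1)\ell+\ell\cdot(n-1)^2\ell\le n^2\ell^2$. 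So the pre-assignment is not a load-balancing device but a per-agent, per-dimension insurance policy; once you adopt this viewpoint, the verification of strong sEF$c$ becomes a three-line chain of inequalities rather than the delicate tuning you anticipated.
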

\begin{proof}
Let $(N,M,L,(v_i)_{i\in N})$ be a given instance of the MDFA problem. 
We assume that $N=[n]$, $M=\{g_1,\dots,g_m\}$, and $L=[\ell]$.
If $m\le n\cdot c$, then simply allocating at most $c$ items to each agent yields a strong EF$c$ allocation.
Hence, in what follows, we assume that $m>n\cdot c~(\ge n^3\ell^2)$.

For each agent $i\in N$ and dimension $k\in L$, we begin by sequentially allocating the $(n-1)^2\ell$ most desired items in dimension $k$ to agent $i$.
Denote by $Z_{i,k}$ the set of items allocated to agent $i$ with respect to dimension $k$. This procedure is described in Algorithm~\ref{alg:n-EFc}.
Intuitively, the set $Z_{i,k}$ is used to compensate for the loss of value in the $k$th dimension that occurs when rounding a fractional solution.
Let $R\coloneqq M\setminus\bigcup_{i\in N}\bigcup_{k\in L}Z_{i,k}$ be the set of remaining items.
Note that, for every $g_j\in Z_{i,k}$ and $g_{j'}\in R$, we have $v_{ijk}\ge v_{ij'k}$.

\begin{algorithm}
\caption{Pre-assignment of items}\label{alg:n-EFc}
Let $R\ot M$\;
\ForEach{$i\in N$ and $k\in L$}{
    Initialize $Z_{i,k}\ot \emptyset$\;
    \For{$p\gets 1$ \KwTo $(n-1)^2\ell$}{
        Choose $g\in\argmax_{g_j\in R}v_{ijk}$\;
        Update $Z_{i,k}\ot Z_{i,k}\cup\{g\}$ and $R\ot R\setminus\{g\}$\;
    }    
}
\end{algorithm}

Consider the following polytope $P$ in $\mathbb{R}^{N\times R}$:
\begin{align}
P\coloneqq\left\{
x\in\mathbb{R}^{N\times R}
\mid
\begin{array}{ll}
\sum_{g\in R} v_{i}(g)_k (x_{ij}-x_{i'j})\ge 0 & (\forall i,i'\in N \text{ with }i\ne i',\ \forall k\in L),\\[2pt]
\sum_{i\in N} x_{ig}=1                         & (\forall g\in R),\\[2pt]
x_{ig}\ge 0                                    & (\forall i\in N,\ \forall g\in R)
\end{array}\!\!%
\right\}.
\end{align}
A feasible point $x\in P$ is a simultaneously proportional fractional allocation of $R$, where $x_{ig}$ is interpreted as the fraction of item $g$ allocated to agent $i$.
Note that, since the uniform allocation $x=(1/n)_{i\in N, g\in R}$ is feasible, the polytope $P$ is nonempty.

Let $m'=|R|$.
Since $P$ is in $m'n$-dimensional space, any extreme point is determined by $m'n$ linearly independent tight constraints.
In particular, at least $m'n-m'-n(n-1)\ell$ of these tight constraints must be zero.
This implies that every extreme point has at most $m'+n(n-1)\ell$ nonzero coordinates.
Consequently, at least $m'-n(n-1)\ell$ items are allocated integrally (i.e., for such an item $g\in R$, there exists some agent $i$ with $x_{ig}=1$).

Now, let $x^*$ be an extreme point of $P$, and for each agent $i\in N$, define $I_i\coloneqq\{g\in R\mid x^*_{ig}=1\}$.
Let $F\coloneqq R\setminus \bigcup_{i\in N}I_i$ be the set of items that are not allocated integrally.
By the above discussion, we have $|F|\le n(n-1)\ell$.
Partition $F$ arbitrary into $n$ subsets $(F_1,\dots,F_n)$ such that $|F_i|\le (n-1)\ell$ for all $i\in N$.
Then, define the allocation $\bA=(A_1,\dots,A_n)$ by $A_i=\left(\bigcup_{k\in L}Z_{i,k}\right)\cup I_i\cup F_i$ for each $i\in N$.
We now show that $\bA$ is a strong sEF$c$ allocation if $c\ge n^2\ell^2$. 

For every $i,i'\in N$ and every $k\in L$, we have
\begin{align}
v_i(A_i)_k
&\ge v_i(I_i)_k+v_i(F_i)_k+v_i(Z_{i,k})_k
\ge v_i(I_{i'})_k-v_i(F)_k+v_i(F_i)_k+v_i(Z_{i,k})_k\\
&= \textstyle v_i\left(A_{i'}\setminus \left(F_{i'}\cup\bigcup_{k'\in L}Z_{i',k'}\right)\right)_k-v_i(F\setminus F_i)_k+v_i(Z_{i,k})_k\\
&\ge \textstyle v_i\left(A_{i'}\setminus \left(F_{i'}\cup\bigcup_{k'\in L}Z_{i',k'}\right)\right)_k,
\end{align}
where the last inequality follows from the facts that, for every $g_j\in Z_{i,k}$ and $g_{j'}\in F\subseteq R$, $v_{ijk}\ge v_{ij'k}$, and $|F\setminus F_i|\le (n-1)^2\ell\le |Z_{i,k}|$.
Therefore, we obtain that $\bA$ is a strong sEF$c$ when $c=n^2\ell^2$ since 
\begin{align}
\textstyle
n^2\ell^2
\ge (n-1)\ell+\ell\cdot (n-1)^2\ell
\ge |F_{i'}|+\sum_{k'\in L}|Z_{i',k'}|
=\textstyle\big|F_{i'}\cup\bigcup_{k'\in L}Z_{i',k'}\big|
\end{align}
for every $i'\in N$.

Finally, since the polytope $P$ has polynomial number of constraints and variables, a basic optimal solution can be computed in polynomial time via standard linear programming techniques.
Hence, the resulting allocation $\bA$ can be found in polynomial time.
\end{proof}

Next, we provide a dynamic programming algorithm that checks existence of a weak sEF$c$ allocation.
The key idea is to process the items one by one, gradually building up partial allocations and summarizing their ``state'' in a way that keeps track of all the information necessary to verify the weak sEF$c$ condition later.
\begin{theorem}\label{thm:dp-weak}
The existence of a weak sEF$c$ allocation can be checked in time $m^{O(n^2\ell)}\cdot (\vmax+1)^{O(cn^2\ell)}$, where 
$\vmax\coloneqq \max_{i\in N}\max_{g_j\in M}\max_{k\in L}v_{ijk}$.
In particular, if $c$, $n$, and $\ell$ are constant and the valuations are encoded in unary, the existence of a weak sEF$c$ can be checked in polynomial time.
Similarly, if $\ell$ and $n$ are constant, $c=O(\log m)$, and the valuations are binary, then the existence of a strong sEF$c$ can be checked in polynomial time.
\end{theorem}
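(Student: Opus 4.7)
The plan is to design a dynamic program that sweeps through the items $g_1,g_2,\dots,g_m$ in order and, after each prefix has been tentatively allocated, compresses every partial allocation to a bounded summary that carries enough information both to be updated further and to resolve the final feasibility check.

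The first step is to identify a sufficient statistic. A full allocation $\bA=(A_1,\dots,A_n)$ is weak sEF$c$ iff, for every ordered pair $i\ne i'$ and every dimension $k$, $v_i(A_i)_k \ge v_i(A_{i'})_k - S_{i,i',k}$, where $S_{i,i',k}$ is the sum of the $\min(c,|A_{i'}|)$ largest values of $\{v_{ijk}:g_j\in A_{i'}\}$, since the optimal set to remove is always a top-$c$ set in the relevant dimension. So the state I attach to each partial allocation is
\[
\sigma=\bigl((v_i(A_{i'})_k)_{i,i'\in N,\,k\in L},\ (T_{i,i',k})_{i\ne i',\,k\in L}\bigr),
\]
where $T_{i,i',k}$ is the sorted $c$-tuple consisting of the top $c$ values of $v_{ijk}$ currently inside $A_{i'}$ (padded with zeros if $|A_{i'}|<c$). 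Each $v_i(A_{i'})_k$ lies in $\{0,1,\dots,m\vmax\}$, and each $T_{i,i',k}$ has entries in $\{0,1,\dots,\vmax\}$, so the number of reachable states is bounded by $(m\vmax+1)^{n^2\ell}\cdot(\vmax+1)^{cn(n-1)\ell}=m^{O(n^2\ell)}\cdot(\vmax+1)^{O(cn^2\ell)}$.

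The DP is then filled item by item: starting from the all-zero state, for each item $g_t$ and each reachable state I branch on the recipient $i^*\in N$, add $v_{itk}$ to every $v_i(A_{i^*})_k$, and update $T_{i,i^*,k}$ by inserting $v_{itk}$ and dropping its smallest entry; all other $T_{i,i',k}$ with $i'\ne i^*$ remain unchanged. A final state is accepting iff $v_i(A_i)_k+\mathrm{sum}(T_{i,i',k})\ge v_i(A_{i'})_k$ for every $i\ne i'$ and every $k$. Since both the transition rules and the acceptance test depend only on $\sigma$, any two partial allocations with the same state are interchangeable, yielding correctness. The total running time is the number of states times a polynomial per-transition overhead, matching the claimed $m^{O(n^2\ell)}\cdot(\vmax+1)^{O(cn^2\ell)}$ bound.

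The two specialisations then follow with little extra work. If $c,n,\ell$ are constant and the valuations are unary-encoded, then $\vmax$ is polynomial in the input size, so the state space is polynomial in $m$ and $\vmax$. For the strong-sEF$c$ claim in the binary regime with $\ell,n$ constant and $c=O(\log m)$, I would replace the top-$c$ tuple by, for each pair $(i,i')$, the count vector of the at most $2^\ell$ item-types (defined by the signature $(v_{ijk})_{k\in L}\in\{0,1\}^\ell$) appearing in $A_{i'}$ from agent $i$'s viewpoint; each such vector has $O(1)$ entries, each at most $m$, giving $m^{O(n^2)}$ states, and the per-pair strong-sEF$c$ test reduces to enumerating the $\binom{c+2^\ell}{2^\ell}=(\log m)^{O(1)}$ type-count vectors of total size at most $c$ and checking whether some choice simultaneously dominates the envy in every dimension. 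The principal obstacle is precisely pinpointing these sufficient statistics: for weak sEF$c$ a dimension-by-dimension top-$c$ summary suffices because the removed set may differ across dimensions, whereas strong sEF$c$ couples all dimensions, so a richer type-count summary is required, and this stays polynomial-size only when $\vmax$ is bounded.
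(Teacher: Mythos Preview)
Your dynamic program for weak sEF$c$ is essentially identical to the paper's: both track, for each ordered pair $(i,i')$ and dimension $k$, the running sum $v_i(A_{i'})_k$ together with the sorted top-$c$ values of $v_{ijk}$ inside $A_{i'}$, update these item by item, and accept when $v_i(A_i)_k\ge v_i(A_{i'})_k-\sum_s T_{i,i',k,s}$. The state-space count and the ``constant $c,n,\ell$ with unary valuations'' corollary match the paper line for line.

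Where you diverge is on the final sentence about \emph{strong} sEF$c$ in the binary regime. The paper's written proof never treats this clause; if ``strong'' is read as a typo for ``weak'' (the theorem is, after all, the weak-DP theorem), the claim is immediate from the main bound since $\vmax=1$ collapses the $(\vmax+1)^{O(cn^2\ell)}$ factor to $2^{O(\log m)}=m^{O(1)}$. You instead take the statement literally and supply a genuinely different argument: replace the per-dimension top-$c$ list by, for each pair $(i,i')$, the vector of counts of the $2^\ell$ possible binary types in $A_{i'}$ (as seen by $i$), giving $m^{O(n^2 2^\ell)}=m^{O(1)}$ states, and test strong sEF$c$ by enumerating the $\binom{c+2^\ell}{2^\ell}=(\log m)^{O(1)}$ removal-type multisets. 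This is correct and actually proves more than the paper's proof does, since the strong condition couples dimensions and cannot be certified from per-dimension top-$c$ lists alone; your type-count summary is exactly the right sufficient statistic here.
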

\begin{proof}
Let $(N,M,L,(v_i)_{i\in N})$ be an instance of the MDFA problem. 
We assume that $N=[n]$, $M=\{g_1,\dots,g_m\}$, $L=[\ell]$, and that for each agent $i\in N$ and item $g\in M$.
For each $j\in\{0,1,\dots,m\}$, let $M_j=\{g_1,g_2,\dots,g_j\}$ denote the prefix of items. 
Without loss of generality, we assume that $m>n\cdot c$ since otherwise simply allocating at most $c$ items to each agent yields a weak EF$c$ allocation.

For each $j\in \{0,1,\dots,m\}$ and for each allocation $\bA$ of $M_j$, we summarize its state $(V,T)$ as follows:
\begin{itemize}
\item $V\in \mathbb{Z}^{N\times N\times L}$ represents the valuation profile. For every $i,i'\in N$ and every $k\in L$, $V_{i,i',k}=v_{i}(A_{i'})_k\in [0,1,\dots,m\cdot \vmax]$. 

\item $T\in \mathbb{Z}^{N\times N\times L\times [c]}$ encodes the top $c$ item values. For every $i,i'\in N$, every $k\in L$, and $s\in [c]$, let $T_{i,i',k,s}$ denote the $s$th largest value in the multiset $\{v_i(g)\mid g\in A_{i'}\}$.
If $|A_{i'}|<c$, the remaining entries in $T_{i,i',k}$ are filled with $0$.
\end{itemize}

We now describe a dynamic programming algorithm that computes a table $\DP[j][V][T]$, which is a Boolean indicator that is set to $\true$ if there exists an allocation of the items in $M_j$ that achieves the state $(V,T)$, and $\false$ otherwise. Notice that the size of the DP table is at most
\begin{align}
(m+1)\cdot (m\cdot \vmax+1)^{n\cdot n\cdot \ell}\cdot (\vmax+1)^{c\cdot n\cdot n\cdot \ell}=m^{O(n^2\ell)}\cdot (\vmax+1)^{O(cn^2\ell)}.
\end{align}

\medskip
\noindent\textbf{Base case.}
For $j=0$ (i.e., no items are allocated), the table $\DP[0][V][T]$ is $\true$ only when the valuations $V$ and the top $c$ item values $T$ are all zero.
In other words, the DP table can be filled as follows:
\begin{align}
\DP[0][V][T]&=
\begin{cases}
\true  &\text{if }V_{i,i'k}=T_{i,i',k,s}=0\text{ for all $i,i'\in N$, $k\in L$, and $s\in[c]$},\\
\false &\text{otherwise}.
\end{cases}
\end{align}

\medskip
\noindent\textbf{Inductive Step.}
For $j\ge 1$, assume that $\DP[j-1][V][T]=\true$ for some state $(V,T)$ corresponding to an allocation of $M_{j-1}$. We then process item $g_j$. For each agent $i\in N$, consider allocating $g_j$ to $i$. For each such choice, we update the state as follows:
\begin{itemize}
\item \textbf{Update $V$:} For every $i',i''\in N$ and every $k\in L$, define
\begin{align}
V'_{i',i'',k}&=\begin{cases}
V_{i',i'',k} + v_{i'jk}  & \text{if }i''=i,\\
V_{i',i'',k}             & \text{otherwise}.
\end{cases}
\end{align}
This corresponds to the valuation profile of the allocation where item $g_j$ is additionally assigned to agent $i$.

\item \textbf{Update $T$:} For every $i'\in N$ and every $k\in L$, define $T'_{i',i,k}$ by updating $T_{i',i,k}$ by inserting the value $v_{i'jk}$ into the multiset $\{T_{i',i,k,1},T_{i',i,k,2},\dots,T_{i',i,k,c}\}$ and retaining only the $c$ highest entries. For every agent $i''\neq i$, $T'_{i',i'',k}$ is defined to be $T_{i',i'',k}$.
\end{itemize}
Note that this update can be performed in $O(n^2\cdot\ell\cdot c)$ time.

After performing these updates for every state $(V,T)$ with $\DP[j-1][V][T]=\texttt{true}$ and for every choice of agent $i\in N$ who receives $g_j$, we set the corresponding entry $\DP[j][V'][T']$ to $\true$. All other cells are set to $\texttt{false}$.

\medskip
After processing all items, the table $\DP[m][\,\cdot\,][\,\cdot\,]$ represents possible states corresponding to full allocations of $M$. We then examine each state $(V,T)$ such that $\DP[m][V][T]=\true$ to verify the weak sEF$c$ condition. Specifically, for every pair of distinct agents $i,i'\in N$ and for every dimension $k\in L$, weak sEF$c$ requires that there exists a set $X\subseteq A_{i'}$, with $|X|\le c$, such that
$v_i(A_i)_k\ge v_i(A_{i'})_k-\sum_{g\in X}v_i(g)_k$.
Since the list $T_{i,i',k}$ contains the $c$ highest values among the multiset $\{v_i(g)_k : g\in A_{i'}\}$, it suffices to verify that
$V_{iik}\ge V_{ii'k}-\sum_{s=1}^c T_{i,i',k,s}$
for every distinct $i,i'\in N$ and every $k\in L$. This condition can be checked in $O(n^2\cdot \ell\cdot c)$ time.
A weak sEF$c$ allocation exists if and only if at least one state $(V,T)$ such that $\DP[m][V][T]=\true$ satisfies this inequality.
Finally, the overall running time is 
\begin{align}
m^{O(n^2\ell)}\cdot (\vmax+1)^{O(cn^2\ell)}\cdot n\cdot O(n^2\cdot \ell\cdot c) = 
m^{O(n^2\ell)}\cdot (\vmax+1)^{O(cn^2\ell)},
\end{align}
since computing the DP table dominates the running time.
This completes the proof.
\end{proof}

Finally, we provide a dynamic programming algorithm that checks the existence of a strong sEF$c$ allocation.
Unlike the approach for weak sEF$c$, we first enumerate all possible combinations of subsets used to eliminate envies.
\begin{theorem}\label{thm:dp-strong}
The existence of a strong sEF$c$ allocation can be checked in time $m^{O(n^2(c+\ell))}\cdot (\vmax+1)^{O(n^2\ell)}$, where $\vmax\coloneqq \max_{i\in N}\max_{g_j\in M}\max_{k\in L}v_{ijk}$.
In particular, if $c$, $n$, and $\ell$ are constant and the valuations are encoded in unary, then the existence of a strong sEF$c$ can be checked in polynomial time.
\end{theorem}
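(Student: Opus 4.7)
The plan is to decouple the choice of removal sets from the allocation itself. The key obstacle compared to Theorem~\ref{thm:dp-weak} is that the removal set $X_{i,i'}$ in strong sEF$c$ must be a \emph{single} set that works across all dimensions simultaneously, so summarizing only the top-$c$ values per dimension (as done in the weak case) is insufficient to verify the condition at the end of the DP. To overcome this, I would first enumerate all candidate tuples of removal sets, and then for each fixed tuple run a simpler DP that tracks only the valuation profile.

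\textbf{Enumeration stage.} Enumerate all tuples $(X_{i,i'})_{i,i'\in N,\, i\ne i'}$ where each $X_{i,i'}\subseteq M$ has $|X_{i,i'}|\le c$. The number of such tuples is at most $\binom{m}{\le c}^{n(n-1)}=m^{O(n^2 c)}$. For each tuple, filter out inconsistent ones: if $g\in X_{i_1,i'_1}\cap X_{i_2,i'_2}$ for some agents with $i'_1\ne i'_2$, then no allocation can satisfy $X_{i_1,i'_1}\subseteq A_{i'_1}$ and $X_{i_2,i'_2}\subseteq A_{i'_2}$ simultaneously. For a consistent tuple, define a forced-assignment map $\phi(g)=i'$ whenever $g\in X_{i,i'}$ for some $i$, and leave $\phi(g)$ undefined otherwise.

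\textbf{DP stage.} For each consistent tuple, run a DP that fills $\DP[j][V]$ indexed by $j\in\{0,1,\dots,m\}$ and $V\in\{0,1,\dots,m\vmax\}^{N\times N\times L}$, where $\DP[j][V]=\true$ iff there is an allocation of $M_j$ respecting all forced assignments that induces valuation profile $V_{i',i,k}=v_{i'}(A_i)_k$. The base case is $\DP[0][\mathbf{0}]=\true$. For the transition at item $g_j$, if $\phi(g_j)$ is defined, assign $g_j$ to $\phi(g_j)$; otherwise try each $i\in N$. Each transition updates the appropriate entries of $V$ in $O(n\ell)$ time. After processing all items, declare success if some state $V$ with $\DP[m][V]=\true$ satisfies $V_{i,i,k}\ge V_{i,i',k}-v_i(X_{i,i'})_k$ for all distinct $i,i'\in N$ and all $k\in L$, where $v_i(X_{i,i'})_k$ is precomputed from the fixed tuple.

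\textbf{Correctness and complexity.} Correctness follows because if a strong sEF$c$ allocation $\bA$ exists, its witness sets form a consistent tuple captured by the enumeration, and the DP reaches the corresponding profile; conversely, any state satisfying the final inequality together with its forced assignments yields a strong sEF$c$ allocation, since the precomputed $v_i(X_{i,i'})_k$ exactly quantifies the envy reduction achievable by removing $X_{i,i'}\subseteq A_{i'}$. The total running time is the product of the number of tuples, the DP table size, and the per-transition cost:
\begin{align}
m^{O(n^2 c)}\cdot (m\vmax+1)^{n^2\ell}\cdot m\cdot O(n^2\ell)
=m^{O(n^2(c+\ell))}\cdot(\vmax+1)^{O(n^2\ell)},
\end{align}
matching the claimed bound. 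The main subtlety is in the consistency step and ensuring the enumerated tuple is not merely feasible but genuinely witnesses envy elimination for the allocation the DP reconstructs; both are handled by tying the forced assignments $\phi$ directly to the removal sets and by verifying the inequality with the precomputed removal values.
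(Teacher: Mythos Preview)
Your proposal is correct and follows essentially the same approach as the paper: enumerate all candidate removal-set tuples $(X_{i,i'})$, filter for consistency, and for each consistent tuple run a DP over items that tracks only the valuation profile $V\in\{0,\dots,m\vmax\}^{N\times N\times L}$, checking the strong sEF$c$ inequalities at the end using the precomputed removal values. The only cosmetic difference is that the paper pre-assigns the items in $\bigcup X_{i,i'}$ in the base case and iterates the DP over the remaining items, whereas you fold the forced assignments into the DP transitions via $\phi$; the two are equivalent and yield the same running-time bound.
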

\begin{proof}
Let $(N,M,L,(v_i)_{i\in N})$ be an instance of the MDFA problem. 
We assume that $N=[n]$, $M=\{g_1,\dots,g_m\}$, $L=[\ell]$, and that for each agent $i\in N$ and item $g\in M$.
Without loss of generality, we assume that $m>n\cdot c$ since otherwise simply allocating at most $c$ items to each agent yields a strong EF$c$ allocation.

We enumerate a collection $(X_{ii'})_{i,i'\in N}$ with $|X_{ii'}|\le c$ for all $i,i'\in N$.
Each $X_{ii'}$ represents the subset that is used to eliminate envy from agent $i$ towards agent $i'$.
We consider allocations $\bA$ that satisfies $A_i\supseteq \bigcup_{i'\in N}X_{ii'}$ for all agent $i\in N$.
To satisfy consistency, for distinct agents $i$ and $\iota$, the sets of items used for envy elimination must be disjoint, i.e., 
$\left(\bigcup_{i'\in N}X_{ii'}\right)\cap \left(\bigcup_{\iota'\in N}X_{\iota\iota'}\right)=\emptyset$.
Note that the number of possible choices for the collection $(X_{ii'})_{i,i'\in N}$ is at most $\prod_{i,i'\in N} (1+m)^c  = m^{O(cn^2)}$.

Fix some collection $(X_{ii'})_{i,i'\in N}$ satisfying $|X_{ii'}|\le c$ for all $i,i'\in N$ and $X_{ii'}\cap X_{\iota\iota'}=\emptyset$ for all $i,i',\iota,\iota'$ with $i\ne \iota$.
Define $X_i=\bigcup_{i'\in N}X_{ii'}$ for each $i\in N$ and $X=\bigcup_{i\in N}X_{i}$. Let $M'=M\setminus X$ be the set of remaining items.
Write $M'$ as $\{g_{\sigma(1)},\dots,g_{\sigma(m')}\}$ and for each $j\in\{0,1,\dots,m'\}$, let $M'_j=\{g_{\sigma(1)},g_{\sigma(2)},\dots,g_{\sigma(j)}\}$ denote the prefix of items.

For each $j\in \{0,1,\dots,m'\}$ and for each allocation $\bA$ of $X\cup M'_j$ with $A_i\supseteq \bigcup_{i'\in N}X_{ii'}$ for all $i\in N$, we keep the valuation profile $V=(v_{i}(A_{i'})_k)_{i,i'\in N,\,k\in L}$.

We now describe a dynamic programming algorithm that computes a table $\DP[j][V]$, which is a Boolean indicator that is set to $\true$ if there exists an allocation of the items in $X\cup M'_j$ that achieves the valuation profile $V$, and $\false$ otherwise. Notice that the size of the DP table is at most
$(m+1)\cdot (m\cdot \vmax+1)^{n\cdot n\cdot \ell}=m^{O(n^2\ell)}\cdot (\vmax+1)^{O(n^2\ell)}$.

\medskip
\noindent\textbf{Base case.}
For $j=0$ (i.e., no items are allocated), the table $\DP[0][V]$ is $\true$ only when the valuation profile $V$ correspond to the allocation $(X_1,\dots,X_n)$.
In other words, the DP table can be filled as follows:
\begin{align}
\DP[0][V]&=
\begin{cases}
\true  &\text{if }V_{i,i'k}=\sum_{g_j\in X_{i'}}v_{ijk}\text{ for all $i,i'\in N$, $k\in L$},\\
\false &\text{otherwise}.
\end{cases}
\end{align}

\medskip
\noindent\textbf{Inductive Step.}
For $j\ge 1$, assume that $\DP[j-1][V]=\true$ for some valuation profile $V$ corresponding to an allocation of $X\cup M'_{j-1}$. We then process item $g_{\sigma(j)}$. For each agent $i\in N$, consider allocating $g_{\sigma(j)}$ to $i$. For each such choice, we construct an updated valuation profile $V'$ as follows:
\begin{align}
V'_{i',i'',k}&=\begin{cases}
V_{i',i'',k} + v_{i'jk}  & \text{if }i''=i,\\
V_{i',i'',k}             & \text{otherwise},
\end{cases}
\quad\text{for every $i',i''\in N$ and every $k\in L$}.
\end{align}
This corresponds to the valuation profile of the allocation where item $g_{\sigma(j)}$ is additionally assigned to agent $i$.
After performing these updates for every valuation profile $V$ with $\DP[j-1][V]=\texttt{true}$ and for every choice of agent $i\in N$ who receives $g_{\sigma(j)}$, we set the corresponding entry $\DP[j][V']$ to $\true$. All other cells are set to $\texttt{false}$.

\medskip
After processing all items, the table $\DP[m][\,\cdot\,]$ represents possible valuation profile corresponding to full allocations of $M$. We then examine each valuation profile $V$ such that $\DP[m][V]=\true$ to verify the strong sEF$c$ condition. Specifically, for every pair of distinct agents $i,i'\in N$ and for every dimension $k\in L$, it suffices to verify that
$V_{iik}\ge V_{ii'k}-\sum_{j\in g_j\in X_{i,i'}} v_{ijk}$.

This condition can be checked in $O(n^2\cdot \ell\cdot c)$ time.
A strong sEF$c$ allocation $\bA$ corresponding to $(X_{ii'})_{i,i'\in N}$ exists if and only if at least one state $V$ such that $\DP[m][V]=\true$ satisfies this inequality.

Finally, the overall running time is 
\begin{align}
m^{O(cn^2)}\cdot m^{O(n^2\ell)}\cdot (\vmax+1)^{O(n^2\ell)}\cdot n\cdot O(n^2\cdot \ell\cdot c) = m^{O(n^2(c+\ell))}\cdot (\vmax+1)^{O(n^2\ell)}.
\end{align}
This completes the proof.
\end{proof}


\section{Hardness of Checking Existence}
In this section, we show that checking the existence of a weak and a strong sEF1 allocation are NP-hard.
Moreover, we demonstrate the NP-hardness of checking whether a given allocation is strongly sEF$c$.

We first prove that checking whether a weak sEF1 allocation exists is strongly NP-complete, even in the case of identical binary valuations with two agents.
\begin{theorem}\label{thm:hard-weak}
Checking the existence of a weak sEF1 allocation is strongly NP-complete even for identical binary case with two agents. 
\end{theorem}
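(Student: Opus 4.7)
The plan is to reduce from monotone not-all-equal 3-SAT (NAE-3SAT)---the problem of deciding whether there is a truth assignment such that, in every clause consisting of three distinct positive literals, at least one literal is true and at least one is false---which is NP-hard by Schaefer's dichotomy theorem. Membership in NP is immediate, since given a candidate allocation $\bA=(A_1,A_2)$, the weak sEF1 condition, under identical binary valuations, reduces to checking
\[
\bigl||A_1\cap M_k|-|A_2\cap M_k|\bigr|\le 1
\]
for every dimension $k$, where $M_k\coloneqq\{g_j\in M : v(g_j)_k=1\}$; this check is polynomial.

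For the reduction, given an NAE-3SAT instance with variables $x_1,\dots,x_p$ and clauses $C_1,\dots,C_q$ where $C_j=\{x_{a_j},x_{b_j},x_{c_j}\}$, I construct an MDFA instance with $n=2$ identical agents, $m=p$ items (one $g_i$ per variable $x_i$), and $\ell=q$ dimensions (one $k_j$ per clause). The valuation is defined by $v(g_i)_{k_j}=1$ if $i\in\{a_j,b_j,c_j\}$ and $v(g_i)_{k_j}=0$ otherwise. This is clearly a binary identical instance whose size is polynomial in the input.

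The crucial observation is that since $|M_{k_j}|=3$ is odd, the balance condition in dimension $k_j$ is equivalent to demanding a $2$-$1$ split of the three items $g_{a_j}, g_{b_j}, g_{c_j}$---in other words, they are not all assigned to the same agent. This is precisely the NAE constraint on clause $C_j$. Interpreting $g_i\in A_1$ as $x_i=\true$ and $g_i\in A_2$ as $x_i=\false$ yields a direct bijection between weak sEF1 allocations and NAE-satisfying assignments, establishing both directions of the reduction. Because all valuations lie in $\{0,1\}$, the numerical magnitudes are constant, which makes the hardness strong.

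The main conceptual step is noticing that for odd $|M_k|$, the ``balance-by-one'' condition collapses exactly onto the NAE constraint; beyond that, the remaining work is the routine verification that with two identical agents and binary valuations, no weak sEF1 constraint outside the per-dimension balance conditions arises, since removing a single item can decrease the value of an envied bundle in a given dimension by at most $1$, making the per-dimension balance both necessary and sufficient.
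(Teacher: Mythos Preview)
Your proposal is correct and essentially identical to the paper's proof: both reduce from monotone NAE-3SAT with the same construction (items $\leftrightarrow$ variables, dimensions $\leftrightarrow$ clauses, $v(g_i)_{k_j}=1$ iff $x_i$ occurs in $C_j$), and both hinge on the observation that a $3$--$0$ split in a dimension violates weak sEF1 while a $2$--$1$ split does not. Your framing via the balance inequality $\bigl||A_1\cap M_k|-|A_2\cap M_k|\bigr|\le 1$ is a clean way to package the same argument the paper carries out directly.
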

\begin{proof}
The problem is in NP, since we can verify the condition of weak sEF1 for a given allocation in polynomial time.

We prove strong NP-hardness by reduction from the monotone not-all-equal 3-SAT (MNAE3SAT) problem, which is known to be strongly NP-complete~\cite{schaefer1978complexity}. 
In the problem, we are given $n$ variables $x_1,x_2,\dots,x_n$ and $m$ clauses $C_1,C_2,\dots,C_m$.
Each clause $C_k$ is of the form $C_k=(z_{k,1}, z_{k,2}, z_{k,3})$, where every literal $z_{k,1},z_{k,2},z_{k,3}$ is one of the variables $x_1,x_2,\dots,x_n$.
The goal is to decide whether there exists a truth assignment to the variables such that, in each clause, the literals do not all share the same truth value; in other words, each clause must contain at least one true literal and at least one false literal.

From a given instance of the MNAE3SAT problem, we construct an instance of the MDFA problem as follows.
Let $N=\{1,2\}$, $M=\{g_1,g_2,\dots,g_n\}$, and $L=\{1,2,\dots,m\}$.
For every agent $i\in N$, item $g_j\in M$, and dimension $k\in L$, define
\begin{align}
v_{ijk}=\begin{cases}
1 & \text{if }x_j\in\{z_{k,1},z_{k,2},z_{k,3}\},\\
0 & \text{otherwise}.
\end{cases}
\end{align}
Notice that in every dimension $k$ (which corresponds to clause $C_k$), exactly the three items corresponding to the three variables appearing in $C_k$ are assigned value $1$, and all other items are assigned $0$. Moreover, both agents have the same binary valuation functions.

We now show that the constructed instance has a weak sEF1 alocation if and only if the MNAE3SAT instance is a yes-instance.

\medskip
\noindent\textbf{($\boldsymbol{\Rightarrow}$)}
Suppose that a weak sEF1 allocation $\bA=(A_1,A_2)$ exists for the reduced instance. In the constructed instance, for each dimension $k\in L$, exactly three items (corresponding to the three variables of clause $C_k$) have value $1$. Observe that if, for some dimension $k$, all these three items are allocated to the same agent, then one agent would have a value of $0$ while the other has a value of $3$ in that dimension. Removing one item (value $1$) from the bundle with three would yield a value of $2$, which would still leave an envy. Thus, to satisfy the weak sEF1 condition, it must be that in every dimension $k$ both agents receive at least one item with value $1$.

Now define a truth assignment for the MNAE3SAT instance by assigning
$x_j$ to be $\true$ if $g_j\in A_1$ and $\false$ otherwise.
Since for every clause $C_k$ the corresponding three items (with value $1$ in dimension $k$) are split between the two agents, the clause contains at least one literal that is true and at least one that is false. Hence, the constructed truth assignment satisfies every clause in the MNAE3SAT instance.

\medskip
\noindent\textbf{($\boldsymbol{\Leftarrow}$)}
Conversely, suppose that the given MNAE3SAT instance is a yes-instance.
Let $I\subseteq [n]$ be the set of indices of variables assigned \emph{true} in some satisfying truth assignment, so that in each clause at least one variable is true and at least one is false. We construct an allocation $\bA=(A_1,A_2)$ by letting
$A_1=\{g_j\in M\mid j\in I\}$ and $A_2=\{g_j\in M\mid j\notin I\}$.
Consider any dimension $k\in L$ (i.e., any clause $C_k$). Since $C_k=(z_{k,1}, z_{k,2}, z_{k,3})$, the three items with value $1$ in dimension $k$ are precisely those items corresponding to the variables in $C_k$. Because the truth assignment is satisfying in the MNAE3SAT sense, at least one of these variables is true and at least one is false. Hence, in dimension $k$ the items with value $1$ are split between $A_1$ and $A_2$. In particular, one bundle will contain one item valued $1$ on dimension $k$ while the other will contain two.

Now, note that under our binary valuation, if an agent’s bundle in dimension $k$ yields a total value of $1$ while the other agent’s bundle yields $2$, then there is envy. However, by removing a single item (with value $1$ in dimension $k$) from the envied bundle, the value drops from $2$ to $1$, thus eliminating the envy in that dimension. Since this argument applies to every dimension (clause), the allocation $\bA$ is weak sEF1.

\medskip
Since the reduction can be performed in polynomial time and the existence of a weak sEF1 allocation is equivalent to the satisfiability of the given MNAE3SAT instance, checking the existence of a weak sEF1 allocation is NP-hard even when there are two agents.
\end{proof}

Next, we prove that verifying the existence of a strong sEF1 allocation is NP-complete, even for the identical binary case with two agents and two dimensions.
\begin{theorem}\label{thm:hard-strong}
Checking the existence of a strong sEF1 allocation is weakly NP-complete even when there are only two identical agents and the number of dimensions is two.
\end{theorem}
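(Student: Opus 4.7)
\textbf{Proof plan for Theorem \ref{thm:hard-strong}.} Membership in NP is immediate: given an allocation, one checks each ordered pair of agents by enumerating which item (or no item) is removed and verifying the inequality in both dimensions, all in polynomial time. For hardness, I plan a reduction from \textsc{Partition}, which is (weakly) NP-complete and yields the desired ``weakly NP-complete'' conclusion. Given positive integers $a_1,\dots,a_n$ with $\sum_{j=1}^n a_j=2S$, I build an MDFA instance with $N=\{1,2\}$, $L=\{1,2\}$, identical valuations, and $n+2$ items: ``small'' items $g_1,\dots,g_n$ with $v(g_j)=(a_j,a_j)$, plus two ``large'' items $L_1,L_2$ with $v(L_1)=(M,0)$ and $v(L_2)=(0,M)$, where $M$ is chosen much larger than $2S$ (e.g.\ $M=2S+1$).

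The reduction's soundness rests on the following structural claim: in any strong sEF1 allocation, the items $L_1$ and $L_2$ must be split between the two agents. Suppose for contradiction both $L_1,L_2\in A_1$. Then agent $2$'s envy towards agent $1$ is at least $M+\text{(small stuff)}$ in dimension $2$, and the only item of nonzero value in dimension $2$ inside $A_1$ whose removal decreases that gap meaningfully is $L_2$; but removing $L_2$ leaves agent $2$ envious in dimension $1$ by essentially $M$, which cannot be erased by the \emph{same} removal. The symmetric argument rules out both being in $A_2$. So, up to relabeling, $L_1\in A_1$ and $L_2\in A_2$. With $M>2S$, the only single-item removal that can make agent $1$ not envy agent $2$ in dimension $2$ is $L_2$, and likewise for agent $2$ it is $L_1$; after these removals the strong sEF1 conditions reduce exactly to
\begin{equation*}
\textstyle\sum_{g_j\in A_1\setminus\{L_1\}}a_j \;=\; \sum_{g_j\in A_2\setminus\{L_2\}}a_j \;=\; S,
\end{equation*}
which is a solution to the \textsc{Partition} instance.

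For the converse, given a partition $I\sqcup\bar I=[n]$ with $\sum_{j\in I}a_j=\sum_{j\in\bar I}a_j=S$, define $A_1=\{L_1\}\cup\{g_j:j\in I\}$ and $A_2=\{L_2\}\cup\{g_j:j\in\bar I\}$. A direct check shows that removing $L_2$ from $A_2$ eliminates agent $1$'s envy in both dimensions simultaneously (dimension $1$ is slack because of $L_1\in A_1$; dimension $2$ becomes $S\ge S$), and symmetrically removing $L_1$ from $A_1$ handles agent $2$'s envy. Hence the allocation is strong sEF1.

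The main obstacle I foresee is the ``no sneaky removal'' step: I must argue rigorously that no agent can use a small item $g_k$ (instead of $L_1$ or $L_2$) as the single removed item in the envied bundle. This is handled by the size choice of $M$: any envy in the dimension where the large item dominates is of order $M$, while removing any $g_k$ changes values only by at most $a_k\le 2S<M$, so only the large items can possibly neutralize this gap. Once this is made precise, the equivalence with \textsc{Partition} is clean, the reduction is polynomial, and the theorem follows.
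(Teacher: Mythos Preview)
Your proposal is correct and follows essentially the same approach as the paper: the paper also reduces from \textsc{Partition} using the identical construction with small items $(a_j,a_j)$ and two large items $(2T+1,0)$ and $(0,2T+1)$, then argues the large items must be split and that eliminating envy forces the small items to be partitioned evenly. The only cosmetic difference is that the paper handles the ``no sneaky removal'' step via $v_1(A_1)_2\ge\min_{g\in A_2}v_1(A_2\setminus\{g\})_2=v_1(A_2\setminus\{g_{n+2}\})_2$ rather than your explicit case analysis, but this is the same observation.
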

\begin{proof}
The problem is in NP, since we can verify the condition of strong sEF1 for a given allocation in polynomial time.

We prove NP-hardness by reduction from the PARTITION problem, which is known to be NP-complete~\cite{garey1979computers}. 
In the Partition problem, we are given $n$ positive integers $a_1,a_2,\dots,a_n$ and an integer $T$ such that $\sum_{j=1}^{n}a_j=2T$. The goal is to determine whether there exists a partition $(S_1,S_2)$ of the set $[n]$ such that $\sum_{j\in S_1}a_j=\sum_{j\in S_2}a_j=T$. 

Given an instance of the PARTITION problem, we construct an instance of the MDFA problem as follows.
Let $N=\{1,2\}$, $M=\{g_1,g_2,\dots,g_{n+2}\}$, and $L=\{1,2\}$.
Both agent have identical valuation functions.
For each item $g_j$ with $j\in[n]$, the value is $a_j$ in both dimensions.
Item $g_{n+1}$ has a very high value $2T+1$ in the first dimension and zero in the second, while item $g_{n+2}$ has a value of $2T+1$ for the second dimension and zero in the first.
The valuations are summarized in \Cref{tbl:hard-strong}. 

\begin{table}[ht]
\centering
\begin{tabular}{c|cccccc}
\toprule
             & $g_1$ & $g_2$ & $\cdots$ & $g_n$ & $g_{n+1}$ & $g_{n+2}$\\
\midrule
$v_i(g_j)_1$ & $a_1$ & $a_2$ & $\cdots$ & $a_n$ & $2T+1$    & $0$      \\
$v_i(g_j)_2$ & $a_1$ & $a_2$ & $\cdots$ & $a_n$ & $0$       & $2T+1$   \\
\bottomrule
\end{tabular}
\caption{The valuations for the reduced instance in \Cref{thm:hard-strong}.}\label{tbl:hard-strong}
\end{table}

We now show that the constructed instance has a strong sEF1 allocation if and only if the PARTITION instance is a yes-instance.

\medskip
\noindent\textbf{($\boldsymbol{\Rightarrow}$)}
Suppose that there exists a strong sEF1 allocation $\bA=(A_1,A_2)$ for the constructed instance.
Note that if both extra items $g_{n+1}$ and $g_{n+2}$ were assigned to the same agent, then to eliminate envy the other agent would need to have both items removed, which violates the strong sEF1 condition.
Thus, $g_{n+1}$ and $g_{n+2}$ must be assigned to different agents. Without loss of generality, assume that agent $1$ receives $g_{n+1}$ and agent $2$ receives $g_{n+2}$.
Define
\begin{align}
\textstyle
R_1=\sum_{g_j\in A_1\setminus \{g_{n+1}\}}a_j
\qquad\text{and}\qquad
R_2=\sum_{g_j\in A_2\setminus \{g_{n+2}\}}a_j.
\end{align}
Then, the total values for each agent in each dimension are
\begin{align}
&v_1(A_1)_1=v_2(A_1)_1=2T+1+R_1, && v_1(A_1)_2=v_2(A_1)_2=R_1,\\
&v_1(A_2)_1=v_2(A_2)_1=R_2,      && v_1(A_2)_2=v_2(A_2)_2=2T+1+R_2.
\end{align}

To eliminate envy from agent 1 to agent 2 in the second dimension, it holds that $v_1(A_1)_2\ge \min_{g\in A_2}v_1(A_2\setminus\{g\})_2$. Thus, we have
\begin{align}
\textstyle
R_1=v_1(A_1)_2\ge \min_{g\in A_2}v_1(A_2\setminus\{g\})_2=v_1(A_2\setminus\{g_{n+2}\})_2=R_2.
\end{align}
Similarly, to eliminate envy from agent 2 to agent 1 in the first dimension, we have
\begin{align}
\textstyle
R_2=v_2(A_2)_1\ge \min_{g\in A_1}v_2(A_1\setminus\{g\})_1=v_2(A_1\setminus\{g_{n+1}\})_1=R_1.
\end{align}
Together these yield $R_1=R_2=T$.
This means that the corresponding partition $(\{j\in[n]\mid g_j\in A_1\},\ \{j\in[n]\mid g_j\in A_2\})$ is a solution of the PARTITION instance.

\medskip
\noindent\textbf{($\boldsymbol{\Leftarrow}$)}
Conversely, suppose that the given PARTITION instance is a yes-instance.
Then, there exists a partition $(S_1,S_2)$ of $[n]$ such that $\sum_{j\in S_1}a_j=\sum_{j\in S_2}a_j=T$.
Define an allocation $\bA=(A_1,A_2)$ by $A_1=\{g_j\mid j\in S_1\}\cup\{g_{n+1}\}$ and $A_2=\{g_j\mid j\in S_1\}\cup\{g_{n+2}\}$.
Under this allocation, the total value for each agent in each dimension is:
\begin{align}
&v_1(A_1)_1=v_2(A_1)_1=3T+1, && v_1(A_1)_2=v_2(A_1)_2=T,\\
&v_1(A_2)_1=v_2(A_2)_1=T,    && v_1(A_2)_2=v_2(A_2)_2=3T+1.
\end{align}
Let us verify the strong sEF1 conditions:
\begin{align}
&v_1(A_1)_1\ge v_1(A_2)_1,                     && v_1(A_1)_2\ge v_1(A_2\setminus\{g_{n+2}\})_2,\\
&v_2(A_2)_1=v_2(A_1\setminus\{g_{n+1}\})_1,    && v_2(A_2)_2\ge v_2(A_1)_2.
\end{align}
Thus, removing a single item suffices to eliminate envy in every case, so the allocation is strong sEF1.

\medskip
Since the reduction can be performed in polynomial time and the existence of a strong sEF1 allocation is equivalent to the answer of the given PARTITION instance, checking the existence of a strong sEF1 allocation is NP-hard even when there are two identical agents and two dimensions.
\end{proof}

\begin{remark}
A reduction analogous to that used in the proof of \Cref{thm:hard-strong} shows that if checking the existence of an EF allocation is NP-hard in a certain one-dimensional setting, then the corresponding problem of checking the existence of a strong EF$c$ allocation in $nc$ dimensions is also NP-hard.
Specifically, given an instance $(N=[n],M=\{g_1,\dots,g_m\},(v_i)_{i\in N})$ of the one-dimensional EF allocation existence problem, we construct an instance of the MDFA problem $(\hat{N},\hat{M},\hat{L},(\hat{v}_i)_{i\in N})$ as follows:
set $\hat{N}=N$, $\hat{M}=M\cup\{g_{m+1},\dots,g_{m+nc}\}$, $\hat{L}=[nc]$, and define
\begin{align}
\hat{v}_{ijk}=\begin{cases}
v_{ij} & \text{if }j\le m,\\
\sum_{j'=1}^mv_{ij'}+1       & \text{if }j>m\text{ and }j-m=k,\\
0       & \text{otherwise},
\end{cases}
\quad\text{for all } i\in\hat{N}, g_j\in\hat{M}, k\in\hat{L}.
\end{align}
Using this reduction, we can, for example, prove that checking the existence of a strong sEF$c$ allocation is strongly NP-hard even when the valuations are identical (since it is not difficult to deduce strong NP-hardness for the one-dimensional EF allocation existence problem of identical case by a reduction from the 3-PARTITION problem).
\end{remark}

Finally, we show that determining whether a given allocation is strong sEF$c$ is NP-complete when $c$ is a parameter. Due to space constraints, the proof is deferred to the Appendix.
\begin{restatable}{theorem}{THMsch}\label{thm:strong-check-hard}
Checking whether a given allocation is strong sEF$c$ is strongly NP-complete, even for the identical binary case with two agents, when $c$ is part of the input.
\end{restatable}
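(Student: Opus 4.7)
The plan is to show strong NP-completeness by a reduction from \textsc{Set Cover}. Membership in NP is immediate: given an allocation and a parameter $c$, one may guess, for each of the two ordered pairs of agents, a subset $X$ of at most $c$ items and verify the required inequalities in polynomial time. Since \textsc{Set Cover} is a strongly NP-complete problem whose input contains no large numbers, any polynomial-time many-one reduction from it that uses only $\{0,1\}$-valuations and polynomially many items will yield strong NP-hardness.

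For the reduction, given a \textsc{Set Cover} instance with universe $U=\{u_1,\dots,u_\ell\}$, sets $S_1,\dots,S_m\subseteq U$ (assuming WLOG that $n_d:=|\{j:u_d\in S_j\}|\ge 1$ for every $d$), and target $k$, I construct the following identical binary MDFA instance together with an allocation. Set $N=\{1,2\}$ and $L=[\ell]$. For each $j\in[m]$ introduce an item $s_j$ with $v(s_j)_d=\mathbf{1}[u_d\in S_j]$, and place $s_j$ in $A_2$; for each dimension $d\in L$, introduce $n_d-1$ filler items whose only nonzero entry is a $1$ in dimension $d$, and place all of them in $A_1$. Finally, set $c=k$. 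All valuations are $\{0,1\}$-valued and the total number of items is $O(m\ell)$, so the reduction runs in polynomial time and preserves the unary encoding.

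By construction, $v(A_2)_d=n_d$ and $v(A_1)_d=n_d-1$ for every dimension $d$. Agent $2$ therefore does not envy agent $1$ in any dimension, so the strong sEF$c$ condition from agent $2$ to agent $1$ is satisfied with the empty removal set. The remaining condition, from agent $1$ to agent $2$, demands a set $X\subseteq\{s_1,\dots,s_m\}$ with $|X|\le c$ satisfying $\sum_{s_j\in X}\mathbf{1}[u_d\in S_j]\ge 1$ for every $d\in L$, which is precisely the statement that the indexed subfamily covers $U$ using at most $k$ sets. Hence the constructed allocation is strong sEF$c$ if and only if the given \textsc{Set Cover} instance admits a cover of size at most $k$.

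The main technical point, and essentially the only step requiring care, is the calibration of the filler items so that the envy vector $v(A_2)-v(A_1)$ becomes exactly the all-ones vector; this is what makes the envy-elimination problem on the agent-$1$ side equivalent to \textsc{Set Cover} rather than to a weaker multi-cover variant. Everything else---membership in NP, the asymmetry that trivializes the agent-$2$ side, and the polynomial size of the construction---is routine.
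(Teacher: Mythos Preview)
Your reduction from \textsc{Set Cover} is correct. The paper instead reduces from \textsc{3-Dimensional Matching}: it gives agent~1 a single ``all-ones'' item $g^*$ (so $v(A_1)_w=1$ for every dimension) and gives agent~2 one item per triple, with $c=|T|-n$; then eliminating agent~1's envy forces the remaining triples to hit every element at most once, which together with the size bound forces a perfect matching. The two constructions differ mainly in how the target envy vector is engineered: the paper uses one universal item on the agent-1 side, whereas you calibrate dimension-by-dimension with $n_d-1$ filler items. Your approach is a clean and direct encoding of covering, at the cost of $\sum_d(n_d-1)$ extra items; the paper's approach is a bit leaner in item count and exploits the exact-cover structure of 3DM, but the correctness argument (showing that the size bound plus the ``at most once'' constraint forces a perfect matching) is slightly more delicate than your one-line equivalence with set cover. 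Either way, both reductions are polynomial, use only $\{0,1\}$ valuations with two identical agents, and establish strong NP-hardness.
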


\section{Conclusion}
In this paper, we introduced the problem of allocating indivisible items in a multidimensional setting, where each agent's valuation is represented by an $\ell$-dimensional vector. We explored two notions of relaxed envy-freeness tailored to this setting, which we call weak sEF$c$ and strong sEF$c$. We analyzed upper and lower bounds on $c$ that guarantee the existence of a weak sEF$c$ or a strong sEF$c$ allocation. In addition, we developed algorithms to determine whether a weak sEF$c$ or a strong sEF$c$ allocation exists, and we established NP-hardness results for checking the existence of weak sEF1 and strong sEF1 allocations.
Moreover, we demonstrated that determining whether a given allocation is strong sEF$c$ is NP-complete when $c$ is a parameter.

An important direction for future work is to narrow the gap between the upper and lower bounds on $c$ required to guarantee the existence of a weak sEF$c$ or strong sEF$c$ allocation. In particular, it remains an open question whether a weak sEF1 allocation always exists when the number of dimensions is two and the number of agents is at least $3$. Another promising direction is to explore the existence and computational complexity of other fairness and efficiency notions in multidimensional settings, such as proportionality.
For example, we can define multidimensional version of (relaxed) proportionalities as follows:
\begin{itemize}
\item An allocation $\bA$ is said to be \emph{weak simultaneously proportional up to $c$ goods (weak sPROP$c$)} if for any agent $i\in N$ and any dimension $k\in L$, there exists a set $X\subseteq M\setminus A_i$ such that $|X|\le c$ and $v_i(A_i\cup X)_k\ge v_i(M)_k/n$.
\item An allocation $\bA$ is said to be \emph{strong simultaneously proportional up to $c$ goods (strong sPROP$c$)} if for any agents $i\in N$, there exists a set $X\subseteq M\setminus A_i$ such that $|X|\le c$ and $v_i(A_i\cup X)_k\ge v_i(M)_k/n$ for any dimension $k\in L$.
\end{itemize}
Note that weak sPROP$c$ was introduced by Bu et al.~\cite{Bu2024fair}, but strong sPROP$c$ has not been studied.
For these notions of proportionality, the polytope-based approach would yield a meaningful guarantee; however, whether stronger guarantees can be achieved remains an interesting avenue for future work.

Finally, we believe that a similar analysis should be possible for the chore allocation case, but we leave this for future research.

\newpage
\bibliography{references}

\appendix

\section{Omitted proof}

\THMidExC*
\begin{proof}
Let $(N,M,L,(v_i)_{i\in N})$ be a given instance of the MDFA problem. 
We assume that $N=\{1,2\}$, $M=\{g_1,\dots,g_m\}$, $L=[\ell]$, and $v_1=v_2$.
Consider the following polytope in $\mathbb{R}^m$:
\begin{align}
P\coloneqq\left\{
x\in\mathbb{R}^m
\mid
\begin{array}{ll}
\sum_{j=1}^m v_{1jk}(2x_j-1)=0 & (\forall k\in [\ell]),\\[2pt]
0\le x_j\le 1                  & (\forall j\in [m])
\end{array}\!\!%
\right\}.
\end{align}
A feasible point $x\in P$ is a simultaneously envy-free fractional allocation, where $x_j$ is interpreted as the fraction of item $g_j$ allocated to agent 1.
Note that, since $x=(1/2,1/2,\dots,1/2)$ is in the polytope, so $P$ is not empty.

Since $P$ is $m$-dimensional, any extreme point is determined by $m$ linearly independent tight constraints.
In particular, at least $m-\ell$ of these tight constraints must be of the form $x_j=0$ or $x_j=1$.
This implies that every extreme point has at least $m-\ell$ coordinates that are integral (i.e., equal to $0$ or $1$). Equivalently, there can be at most $\ell$ fractional coordinates.

We now construct an allocation from an extreme point $x^*$ of $P$.
Note that an extreme point must exist since $P$ is nonempty and bounded.
Define
\begin{align}
&I_1\coloneqq \{g_j\in M\mid x^*_j=1\},
&&I_2\coloneqq \{g_j\in M\mid x^*_j=0\},\\
&F_1\coloneqq \{g_j\in M\mid 1/2\le x^*_j<1\},
&&F_2\coloneqq \{g_j\in M\mid 0< x^*_j<1/2\}.
\end{align}
Then, since all fractional coordinates of $x^*$ lie in $F_1\cup F_2$, we have $|F_1|+|F_2|\le \ell$.

Define the allocation $\bA=(A_1,A_2)$ by $A_1=I_1\cup F_1$ and $A_2=I_2\cup F_2$.
Since $x^*\in P$, for every dimension $k\in[\ell]$ we have
\begin{align}
v_1(A_1)_k
= \sum_{g_j\in A_1}v_{1jk}
&\ge \sum_{g_j\in A_1}v_{1jk}(2x^*_j-1)\\
&= \sum_{g_j\in A_2}v_{1jk}(1-2x^*_j)
\ge \sum_{g_j\in I_2} v_{1jk}
= v_1(A_2\setminus F_2)_k.
\end{align}
Similarly, by $x^*\in P$,
for every dimension $k\in[\ell]$ we have
\begin{align}
v_2(A_2)_k
= \sum_{g_j\in A_2}v_{2jk}
&\ge \sum_{g_j\in A_2}v_{2jk}(1-2x^*_j)\\
&= \sum_{g_j\in A_1}v_{2jk}(2x^*_j-1)
\ge \sum_{g_j\in I_1} v_{2jk}
= v_2(A_1\setminus F_1)_k.
\end{align}
Thus, any envy can be eliminated by removing all the fractional items from the bundle of the other agent. Hence, the allocation is strong sEF$\ell$.
Since $c\ge \ell$, the allocation is strong sEF$c$ as well.

Finally, since the polytope $P$ has polynomial number of constraints and variables, a basic optimal solution can be computed in polynomial time via standard linear programming techniques.
Hence, the resulting allocation $\bA$ can be found in polynomial time.
\end{proof}

\THMsch*
\begin{proof}
We first note that the problem is in NP. Indeed, for any distinct agents $i,i'$ and any candidate removal set $X\subseteq A_{i'}$ with $|X|\le c$, we can verify in polynomial time that for every dimension $k\in L$, it holds that $v_i(A_i)_k\ge v_i(A_{i'}\setminus X)_k$.

We now prove strong NP-hardness by reduction from the \emph{3-Dimensional Matching (3DM)} problem, which is known to be strongly NP-complete~\cite{garey1979computers}.
An instance of 3DM consists of three disjoint sets $X=\{x_1,\dots,x_n\}$, $Y=\{y_1,\dots,y_n\}$, $Z=\{z_1,\dots,z_n\}$, and a set $T\subseteq X\times Y\times Z$ of triplets.
The goal is to decide whether there exists a perfect matching $T'\subseteq T$ of size $n$, i.e., every element of $X \cup Y \cup Z$ appears in exactly one triple in $T'$.
Without loss of generality, we assume that $|T|>n$ since otherwise it is easy to verify existence of a perfect matching.

Given an instance of the 3DM problem, we construct an instance of the MDFA problem $(N,M,L,(v_i)_{i\in N})$ with an allocation $\bA$ and a positive integer $c$ as follows.
Let $N=\{1,2\}$, $M=T\cup\{g^*\}$ (i.e., one item from each triplet in $T$, plus a special item $g^*$), and $L=X\cup Y\cup Z$.
For every agent $i\in N$, item $t=(x,y,z)\in T~(=M\setminus\{g^*\})$, and dimension $w\in L$, define
\begin{align}
v_{i}(t)_w=\begin{cases}
1 & \text{if }w\in\{x,y,z\},\\
0 & \text{otherwise}.
\end{cases}
\end{align}
For the special item $g^*$, set $v_{i}(g^*)_w=1$ for all $i\in N$ and $w\in L$.
Since the valuation functions are defined in the same way for both agents, the instance is identical and binary.
In addition, we define an allocation $\bA=(\{g^*\},\,M\setminus\{g^*\})$, and we set the parameter $c=|T|-n~(\ge 1)$.

We now show that $\bA$ is strong sEF$c$ if and only if the 3DM instance is a yes-instance.

\medskip
\noindent\textbf{($\boldsymbol{\Rightarrow}$)}
Suppose that $\bA$ is strong sEF$c$. 
By definition, we must have
$$1=v_1(A_1)_w\ge v_1(A_2\setminus X)_w \qquad (\forall w\in L)$$ 
for some $Q \subseteq A_2$ with $|Q| \leq c=|T|-n$.
Define $T'=T\setminus Q$.
Since $|Q|\le c=|T|-n$, it follows that $|T'|=|T|-|Q|\ge n$.
Now, we prove that $T'=T\setminus Q$ is a perfect matching for the 3DM.

Suppose, for contradiction, that $T'$ is not a perfect matching.
Then, since $|T'|\ge n$, there exists $w^*\in X\cup Y\cup Z$ that appears at least twice in $T'$.
However, this is impossible because 
$$1=v_1(A_1)_{w^*}\ge v_1(A_2\setminus Q)_{w^*}=\big|\big\{t=(x,y,z)\in A_2\setminus Q\mid w\in\{x,y,z\}\big\}\big|\ge 2.$$ 
Thus, every $w\in X\cup Y\cup Z$ appears at most once in $T'$.
Since $|T'|\ge n$, every element $w\in X\cup Y\cup Z$ appears exactly once in $T'$. In other words, $T'$ is a perfect matching.

\medskip
\noindent\textbf{($\boldsymbol{\Leftarrow}$)}
Conversely, suppose that the given 3DM instance is a yes-instance; that is, there is a perfect matching $T'\subseteq T$.
Define the removal set $Q=T\setminus T'$. Then, we have
$$v_1(A_1)_w=1=\big|\big\{t=(x,y,z)\in T'\mid w\in\{x,y,z\}\big\}\big|=v_1(T')_w=v_1(A_2\setminus Q)_w$$
for any $w\in L$.
Thus, the envy from agent $1$ to agent $2$ can be eliminated by removing $|Q|=|T|-n=c$ items.
In addition, agent 2 does not envy agent 1 because 
$$v_2(A_2)_w\ge v_2(T')_w=\big|\big\{t=(x,y,z)\in T'\mid w\in\{x,y,z\}\big\}\big|=1=v_2(A_1)$$
for any $w\in L$.
Therefore, $\bA$ is a strong sEF$c$ allocation.

\medskip
Since the reduction can be performed in polynomial time and $\bA$ is strong sEF$c$ if and only if the 3DM instance is a yes-instance, checking whether a given allocation is strong sEF$c$ is strongly NP-hard even for the identical binary case.
\end{proof}

\end{document}